\documentclass[10pt]{article}
\usepackage{cite}
\usepackage{graphicx}
\usepackage{siunitx}
\usepackage{mathtools}

\DeclarePairedDelimiter\floor{\lfloor}{\rfloor}
\usepackage{amsmath}
\usepackage{amsthm}
\usepackage{amssymb}
\usepackage{dsfont}
\usepackage{epsfig}
\usepackage{color}
\usepackage{url} 
\newtheorem{proposition}{Proposition}
\newtheorem{lemma}{Lemma}
\newtheorem{corollary}{Corollary}

\begin{document}
\title{An Asymptotic Approximation of TCP CUBIC}
\author{Sudheer Poojary \qquad Vinod Sharma \\ Department of ECE, Indian Institute of Science, Bangalore \\ Email: \{sudheer, vinod\}@ece.iisc.ernet.in}
\maketitle

\begin{abstract}
\label{sec:abstract}
In this paper, we derive an expression for computing average window size of a single TCP CUBIC connection under random losses. Throughput expression for TCP CUBIC has been computed earlier under deterministic periodic packet losses. We validate this expression theoretically. We then use insights from the deterministic loss based model to derive an expression for computing average window size of a single TCP CUBIC connection under random losses. For this computation, we first consider the sequence of TCP CUBIC window evolution processes indexed by the drop rate, $p$ and show that with a suitable scaling this sequence converges to a limiting Markov chain as $p$ tends to $0$. The stationary distribution of the limiting Markov chain is then used to derive the average window size for small packet error rates. We validate our model and approximations via simulations.
\end{abstract}
\section{Introduction}
\label{sec:introduction}
The TCP-IP protocol suite forms the backbone of the current Internet and TCP is a crucial component of it. TCP provides reliable, in-order data transfer and flow and congestion control. In this paper, we focus on TCP congestion control. TCP congestion control has been successful in preventing congestion collapse over the Internet. However in \cite{Huston2006}, \cite{rfc3649} we see that the traditional TCP congestion control algorithms can be very inefficient over wireless links and over high-speed large delay networks. A number of high-speed TCP congestion control algorithms have been proposed to address the issue of inefficiency, some notable examples being H-TCP, BIC, CUBIC, Compound and FAST \cite{Afanasyev2010}. In this paper, we consider TCP CUBIC congestion control as it is widely used. TCP CUBIC is the default congestion control algorithm on Linux since $2006$. In \cite{Yang2014}, the authors report that of the $30000$ web-servers that they considered, more than $25\%$ used TCP CUBIC. 

We first give a brief overview of the literature on traditional Additive Increase Multiplicative Decrease (AIMD), TCP which  has been extensively studied using a wide variety of tools. In \cite{Bonald1999, Liu2003}, the authors use fluid models to analyze TCP performance. In \cite{Bonald1999}, the author compares the performance of TCP Reno with TCP Vegas using a differential equation based model for TCP window evolution, whereas in \cite{Liu2003}, the authors solve for throughput of a large number of TCP Reno, New Reno and SACK flows going through AQM routers. In \cite{Vojnovic2000, Kunniyur2003}, the authors look at optimization based techniques for performance analysis of TCP. In \cite{Vojnovic2000}, the authors show that rate-distribution of TCP-like sources in a general network is given as a solution to a global optimization problem. In \cite{Kunniyur2003} the authors formulate the rate allocation problem as a congestion control game and show that the Nash equilibrium of the game is a solution to a global optimization problem. In \cite{Reddy2004}, the authors consider providing QoS to TCP and real time flows through use of rate control for the real time flows and RED at the bottleneck queues. In \cite{Mathis1997}, the authors provide expressions for TCP Reno throughput using a simple periodic-loss model. In  \cite{Padhye2000}, the authors use Markovian models to derive an expression for TCP Reno throughput under random losses.

In \cite{Xue2014}, \cite{Jain2011}, we see experimental evaluation of high speed TCP variants. The reference \cite{Xue2014} compares the performance of CUBIC, HSTCP and TCP SACK in a $10$ Gbps optical network. In \cite{Jain2011}, the authors perform an experimental evaluation of CUBIC TCP in a small buffer regime. The reference \cite{Weigle2006} is a comprehensive simulation based analysis of high speed TCP variants, where they compare the protocols for intra-protocol and inter-protocol fairness. There are many references on simulation/experimental analysis of TCP CUBIC, however there are fewer analytical results. In \cite{Sudheer2011} and \cite{Bao2010}, the authors use Markov chain based models for TCP CUBIC throughput computations. In \cite{Belhareth2013}, the authors analyze performance of TCP CUBIC in a cloud networking environment using mean-field.

In this paper, we derive throughput expression for a single TCP CUBIC flow with random losses. Throughput expressions for TCP CUBIC have been evaluated under a deterministic loss model in \cite{Ha2008}. Also average window size for TCP CUBIC  with random losses has been numerically computed using Markov chains in \cite{Sudheer2011} and \cite{Bao2010}. In \cite{Sudheer2013Allerton}, we see that the expressions for throughput in \cite{Ha2008} are not accurate when compared against the Markov chain based results in \cite{Sudheer2011}. However, the Markov chain based results do not yield a closed form expression and we need to solve for the stationary distribution of a Markov chain for each value of drop rate, $p$. For small $p$, this could be computationally expensive as the state space of the Markov chain could be very large. We address this drawback of the Markov chain model in this paper. In this paper, we get an approximation for TCP CUBIC under random losses as a function of $p$ and round trip time (RTT). We first validate the expression for TCP CUBIC throughput (given in \cite{Ha2008}) under deterministic periodic losses. We then consider the sequence of the TCP CUBIC window evolution processes indexed by the drop rate, $p$ and show that with a suitable scaling this sequence converges to a limiting Markov chain as $p$ tends to $0$. The appropriate scaling is obtained from the deterministic periodic loss model. The stationary distribution of the limiting Markov chain gives us the desired approximation. Our approach is based on a similar result used for TCP Reno throughput computation in \cite{Dumas2002}. However our proofs are significantly different.

The organization of the paper is as follows. In Section \ref{sec:tcpcubic}, we describe our system model. In Section \ref{sec:fluidmodel}, we validate the deterministic loss model expression. 
In Section \ref{sec:markovmodel}, we show that for $p > 0$ and with $W_{max} = \infty$, the window size process at RTT epochs, the window size process at loss epochs and the time between the loss epochs have unique stationary distributions and that their means under stationarity are also finite. In Section \ref{sec:asymptoticapprox}, we derive an approximation for mean window size under random losses. In Section \ref{sec:simulation_results}, we compare our model predictions against ns2 simulations. Section \ref{sec:conclusion} concludes our paper.
\section{System model for TCP CUBIC}
\label{sec:tcpcubic}

The window size evolution of TCP CUBIC is based on the time since last congestion epoch. The window size, (say $W_0$) at the last epoch is considered as an equilibrium window size. The TCP CUBIC window update is conservative near $W_0$ and is aggressive otherwise. The aggressive behaviour gives TCP CUBIC higher throughput compared to traditional TCP in high speed networks. The TCP CUBIC window size at time $t$, assuming $0$ to be a loss epoch and no further losses in $(0,t]$ is given by
\begin{equation}
\label{eqn:tcp_CUBIC}
W_{cubic}(t) = C\Bigl(t - \sqrt[3]{\frac{\beta W_0}{C}}\Bigr)^3 + W_0,
\end{equation}
where $W_0$ is the initial window size and $C$ and $\beta$ are TCP CUBIC parameters. The TCP CUBIC update can be slower as compared to TCP Reno. To ensure a worst case behaviour like TCP Reno, the aggregate window update is given by $W(t) = \max\{W_{cubic}(t), W_{reno}(t)\}$
where $W_{reno}(t)$ is given by
\begin{equation}
\label{eqn:tcpf}
W_{reno}(t) = W_{0}  (1 - \beta) + 3  \frac{\beta}{2 - \beta}  \frac{t}{RTT}.
\end{equation}
In our analysis henceforth, we ignore the Reno-mode operation focussing only on the CUBIC-mode. However, we account for the Reno-mode operation in the final average window size expression. We consider a single TCP CUBIC connection going through a link with constant RTT (round trip time) as shown in Figure \ref{fig:singleTCP_singleLink}. The packets of the connection may be subject to channel losses. We assume that a packet can be lost independently of other packets with probability $p$. This is a common assumption also made in \cite{Sudheer2011}, \cite{Bao2010}. Our objective is to compute an expression for TCP CUBIC throughput in this setup, which we develop in Section \ref{sec:markovmodel}. In Section \ref{sec:fluidmodel}, we discuss a deterministic loss model for TCP CUBIC and use the results developed therein  in Section \ref{sec:markovmodel} to compute TCP CUBIC average window size. 
\begin{figure}
\centering
\includegraphics[scale=0.55]{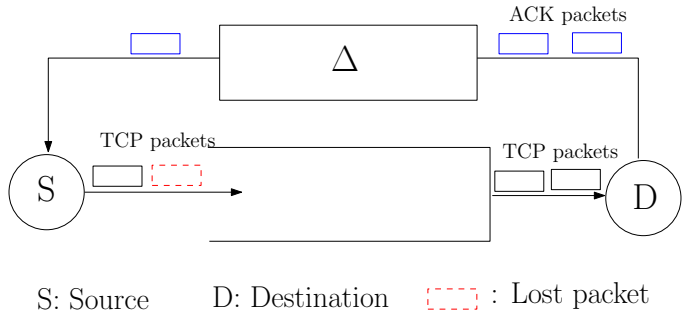}
\caption{Single TCP with fixed RTT}
\label{fig:singleTCP_singleLink}
\end{figure}
\section{Fluid models for TCP CUBIC}
\label{sec:fluidmodel}

We now consider a simple fluid model for TCP CUBIC. For the fluid model, we disregard the discrete nature of the TCP window size and also assume that the window update is continuous instead of happening at discrete intervals of time. The model that we consider here is a widely used deterministic-loss model (see \cite{Ha2008}, \cite{Tan2006Infocom} and \cite{Mathis1997}) used to compute the `response function' of TCP. The TCP response function is an expression for TCP throughput in terms of system parameters such as drop rates $p$ and RTT, $R$. 

We consider a single TCP CUBIC flow with constant RTT, $R$. Each packet can be dropped independently of the others with probability $p$. Under this assumption, the mean number of packets sent between losses is $\frac{1}{p}$.

We now consider the TCP window evolution under a deterministic-loss model with loss rate $\frac{1}{p}$. Let us denote the window size for the deterministic-loss model at time $t$ by $\hat{W}(t)$. Suppose $\hat{W}(0) = x$. Let $\tau_p(x)$ denote the time taken by the $\hat{W}(t)$ process to send $\frac{1}{p}$ packets with initial window size $x$, i.e., $\tau_p(x)$ satisfies
\begin{equation}
\frac{1}{R} \int_0^{\tau_p(x)} \hat{W}(t) dt = \frac{1}{p}.
\end{equation}
At $t$ $= \tau_p(x)$, $\hat{W}(t)$ undergoes a window reduction so that $\hat{W}(\tau_p(x)^+)$ $=(1 - \beta)\hat{W}(\tau_p(x))$, where $\beta$ is the multiplicative drop factor. Next, the window size $\hat{W}(t)$ evolves as given by \eqref{eqn:tcp_CUBIC} but now with initial window size, $\hat{W}(\tau_p(x))$. Again at time $t$ $=\tau_p(x) + \tau_p(\hat{W}(\tau_p(x)))$, $\hat{W}(t)$ process undergoes another loss. This process continues. 

Suppose there exists a $x^*_p$ such that $\hat{W}(\tau_p(x^*_p))$ $=x^*_p$, i.e., the fixed point equation
\begin{equation}
\label{eqn:fixed_point_form}
\hat{W}(\tau_p(x)) = x,
\end{equation}
has a unique solution. Then, if we start from $x^*_p$, the process $\hat{W}(t)$ will have a periodic behaviour with period $\tau_p(x^*_p)$ and $\hat{W}(t) \in [(1 - \beta) x^*_p, x^*_p]$.  The long time average for the process $\hat{W}(t)$ is then given by
\begin{equation}
\label{eqn:generic_rf}
 \frac{1}{\tau_p(x^*_p)} \int_{0}^{\tau_p(x^*_p)}  \hat{W}(t) dt,
\end{equation}
with $\hat{W}(0) = x^*_p$. Using the above model, the average window size for TCP CUBIC is given by  
\begin{equation}
\label{eqn:cubic}
\mathbb{E}[W(p)] = \sqrt[4]{\frac{C(4 - \beta)}{4\beta} \bigl( \frac{R}{p} \bigr)^3}.
\end{equation}
The throughput of the TCP connection is given by $\frac{\mathbb{E}[W(p)]}{R}$. 
In Proposition \ref{prop:CUBIC}, we provide a theoretical justification validating the use of the above expression for mean window size. We prove that starting from any initial window size, under the deterministic loss model with fluid window sizes, the window evolution for TCP CUBIC is eventually periodic with \eqref{eqn:cubic} giving the correct time average window size. In Proposition \ref{prop:CUBIC}, we ignore the slow start phase and ignore that there may be an upper bound on the maximum window size. These assumptions are also made by \cite{Ha2008}, \cite{Tan2006Infocom} and \cite{Mathis1997}. 

\begin{proposition}
\label{prop:CUBIC}
For the deterministic loss model, for any given $p \in (0,1)$, there exists a unique $x$ (denoted by $x^*_p$) such that $\hat{W}(\hat{\tau}_p(x)) = x$. For any $x \geq 1$ such that $\hat{W}(0) = x$, $\hat{W}(t)$ to $x^*_p$ at drop epochs. 
\end{proposition}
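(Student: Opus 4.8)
The plan is to reduce the proposition to the analysis of a single scalar recursion on the window sizes seen at successive drop epochs, and then to combine an easy existence/uniqueness argument with a monotone-convergence argument. Starting a cycle from a drop-epoch window $x$ (so that the CUBIC equilibrium in \eqref{eqn:tcp_CUBIC} is $W_0=x$ and the cycle opens at $(1-\beta)x$), the trajectory $\hat W(t)=C(t-K(x))^3+x$ with $K(x)=\sqrt[3]{\beta x/C}$ is nondecreasing and bounded below by $(1-\beta)x>0$ on $[0,\infty)$. Hence $t\mapsto \frac1R\int_0^t\hat W$ is a strictly increasing bijection of $[0,\infty)$ onto $[0,\infty)$, so $\hat\tau_p(x)$ is well defined and the map $\Phi_p(x):=\hat W(\hat\tau_p(x))=C(\hat\tau_p(x)-K(x))^3+x$ is continuous. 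This $\Phi_p$ is exactly the drop-epoch recursion $x_{n+1}=\Phi_p(x_n)$, and the claim is that it has a unique fixed point $x^*_p$ to which every orbit started at $x\ge 1$ converges.

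Next I would pin down the fixed point. Since $\hat W(\hat\tau_p(x))=x$ forces $(\hat\tau_p(x)-K(x))^3=0$, the fixed-point equation is equivalent to requiring that exactly $1/p$ packets be sent in the time $K(x)$ it takes CUBIC to climb back to its peak. A direct integration gives $h(x):=\frac1R\int_0^{K(x)}\hat W=\frac{4-\beta}{4R}(\beta/C)^{1/3}x^{4/3}$, a continuous strictly increasing bijection of $(0,\infty)$ onto $(0,\infty)$; hence $h(x)=1/p$ has the unique root $x^*_p$, which settles existence and uniqueness. Substituting back recovers \eqref{eqn:cubic} as the time average $\frac{4-\beta}{4}x^*_p$, so this step simultaneously validates the response function. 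The same $h$ yields the crucial \emph{single-crossing} property almost for free: writing $P(t)=\frac1R\int_0^t\hat W$ (increasing, with $P(K(x))=h(x)$), we have $\hat\tau_p(x)\gtrless K(x)$ according as $1/p\gtrless h(x)$, so $\Phi_p(x)>x$ for $x<x^*_p$ and $\Phi_p(x)<x$ for $x>x^*_p$.

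Single crossing alone does not give convergence, so the core step is to show that $\Phi_p$ is monotone increasing; together with continuity and the unique fixed point this yields the standard monotone-iteration conclusion (for $x<x^*_p$ the orbit increases and is bounded above by $x^*_p=\Phi_p(x^*_p)$, hence converges to a fixed point, necessarily $x^*_p$, and symmetrically for $x>x^*_p$), and every $x\ge 1$ lies in one of the two basins $(0,x^*_p]$ or $[x^*_p,\infty)$. Setting $v=\hat\tau_p(x)-K(x)$ and differentiating the defining constraint $\frac{C}{4}v^4+xv+\frac{4-\beta}{4}xK(x)=R/p$ implicitly gives $\Phi_p'(x)=\bigl(x-2Cv^3-(4-\beta)Cv^2K(x)\bigr)/\Phi_p(x)$, with $\Phi_p(x)=Cv^3+x\ge(1-\beta)x>0$, so monotonicity is precisely the positivity of the numerator along the constraint curve.

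\textbf{This positivity is the main obstacle.} The naive bound that uses only $|v|<K(x)$ fails for realistic parameters (for $\beta$ near $0.3$ one has $(4-\beta)\beta>1$), so one must exploit the quartic constraint to control $v$ more tightly, or rearrange the numerator into a manifestly nonnegative form. The asymptotics are reassuring: as $x\to\infty$ the numerator behaves like $(1-\beta)^2x>0$ and $\Phi_p'\to 1-\beta\in(0,1)$, while at $x^*_p$ one has $v=0$ and $\Phi_p'(x^*_p)=1$. That last value shows the fixed point is only \emph{marginally} stable, which is exactly why a contraction-mapping argument is unavailable and the global monotone structure must be used instead. I would therefore devote the bulk of the work to proving the numerator is positive on the whole curve, and then close the argument with the monotone-iteration lemma above.
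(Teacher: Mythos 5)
Your reduction to the drop-epoch map $\Phi_p(x)=\hat W(\hat\tau_p(x))$, the fixed-point computation via $h(x)=\frac{4-\beta}{4R}(\beta/C)^{1/3}x^{4/3}$, and the single-crossing observation are all correct; in fact, deducing the sign of $\Phi_p(x)-x=C\bigl(\hat\tau_p(x)-K(x)\bigr)^3$ directly from $1/p \gtrless h(x)$ is cleaner than the paper's trajectory-comparison argument. But the proof is incomplete exactly where you say it is: the positivity of the numerator $x-2Cv^3-(4-\beta)Cv^2K(x)$ is deferred, and it is the entire content of the convergence claim. Worse, the statement you plan to prove is \emph{false}: $\Phi_p$ is not increasing on all of $(0,\infty)$. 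Fix $p$ and let $x\to 0^+$. Then $K(x)\to 0$ and your constraint $\frac{C}{4}v^4+xv+\frac{4-\beta}{4}xK(x)=R/p$ forces $v\to v_0:=\bigl(4R/(pC)\bigr)^{1/4}>0$, so the numerator tends to $-2Cv_0^3<0$ and $\Phi_p'(x)\to -2$. (A direct check with $C=0.4$, $R/p=1$: $\Phi_p(0^+)\approx 2.25$ while $\Phi_p(0.1)\approx 2.06$.) So no rearrangement will make the numerator nonnegative ``on the whole curve,'' and the global monotone-iteration scheme as stated cannot be closed.

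The gap is repairable because monotonicity is needed only on $[x^*_p,\infty)$, i.e., for $v=\hat\tau_p(x)-K(x)\in(-K(x),0]$. Writing $u=|v|$, the numerator there equals $x+2Cu^3-(4-\beta)CK(x)u^2$; the function $u\mapsto 2Cu^3-(4-\beta)CKu^2$ is decreasing on $[0,K]$ (its interior critical point $(4-\beta)K/3$ exceeds $K$ for $\beta\in(0,1)$), so the numerator is at least $x+(\beta-2)\beta x=(1-\beta)^2x>0$, using $CK(x)^3=\beta x$. Hence $\Phi_p$ is increasing on $[x^*_p,\infty)$, which yields $\Phi_p(x)>\Phi_p(x^*_p)=x^*_p$ for $x>x^*_p$ and thus monotone decreasing convergence in the upper basin; below $x^*_p$ you only have $\Phi_p(x)>x$, so the orbit increases until it either converges to a fixed point (necessarily $x^*_p$, by the continuity of $\Phi_p$ you established) or overshoots into the upper basin --- precisely the case analysis the paper uses. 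Note that the paper sidesteps monotonicity of $\Phi_p$ entirely: it proves $\Phi_p(x)>x^*_p$ for $x>x^*_p$ directly, by integrating the trajectory up to the hitting time $T_1(x)$ of $x^*_p$, and showing $\int_0^{T_1(x)}\hat W(u,x)\,du<R/p$ via the elementary inequality $(1+\gamma)^{4/3}-\gamma^{4/3}-\frac{4}{3}\gamma^{1/3}<1$ of Lemma \ref{lemma:mathstackex}; your route, once restricted to $v\le 0$, gives the same key inequality by a shorter calculus argument, but as submitted the proposal both leaves the decisive step open and aims at a claim that fails.
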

\begin{proof}
\subsection*{Existence of $x^*_p$}
Assuming the initial window size to be $x$, we have
\begin{equation*}
\hat{W}(\hat{\tau}_p(x)) = C\bigl(\hat{\tau}_p(x) - \sqrt[3]{\frac{\beta x}{C}}\bigr)^3 + x.
\end{equation*}
Solving for the fixed point, $x^*_p$  of $\hat{W}(\hat{\tau}_p(x))$ gives us $\hat{\tau}_p(x^*_p) = \sqrt[3]{\frac{\beta x^*_p}{C}}$. Since $\frac{1}{p}$ packets are sent in $\hat{\tau}_p(x)$, we have
\begin{equation*}
\frac{1}{R} \int_{0}^{\hat{\tau}_p(x^*_p)} \hat{W}(u) du  = \frac{1}{p}. 
\end{equation*}
The fixed point, $x^*_p$ for $\hat{W}(\hat{\tau}_p(x))$ is then given by
\begin{equation}
\label{eqn:CUBIC_xstar}
x^*_p =  \sqrt[4]{\frac{C}{\beta}} \bigl( \frac{4}{(4 - \beta}  \frac{R}{p} \bigr)^{\frac{3}{4}}.
\end{equation}
Thus for every $p \in (0,1)$, there exists a unique $x^*_p$ given by \eqref{eqn:CUBIC_xstar} such that $\hat{W}(\hat{\tau}_p(x^*_p)) = x^*_p$.
 
\subsection*{Convergence to $x^*_p$}
Let us denote the deterministic process, $\hat{W}(u)$ at time $u > 0$ with $\hat{W}(0) = x$ by $\hat{W}(u, x)$ so as to also include the initial window size in the process description explicitly. We will show convergence of the map $x \rightarrow \hat{W}(\tau_p(x))$ to the fixed point in two steps. We define $J(x) = \sqrt[3]{\frac{\beta x}{C}}$ to be the time taken by $\hat{W}(t)$ to hit $x$ given that initial window size, $\hat{W}(0)$ before drop was $x$ and there are no losses in $(0, u]$, with $u > J(x)$. 

\subsubsection*{Step 1:} We first show that if $x < x^*_p$, then $x < \hat{W}(\tau_p(x), x)$. Since $x < x^*_p$, $J(x) < J(x^*_p)$. For $t < J(x)$, we have $(t - J(x))^2 < (t - J(x^*_p))^2$ which implies $\frac{d\hat{W}(t,x)}{dt} < \frac{d\hat{W}(t,x^*_p)}{dt}$. Also $x = \hat{W}(0,x) < \hat{W}(0,x^*_p) = x^*_p$. 
Therefore for $t < J(x)$, $\hat{W}(t,x) < \hat{W}(t,x^*_p)$. Hence, we have 
\begin{equation*}
\int_{0}^{J(x)} \hat{W}(u,x) du < \int_{0}^{J(x)} \hat{W}(u,x^*_p) du < \int_{0}^{J(x^*_p)} \hat{W}(u,x^*_p) = \frac{R}{p}.
\end{equation*}
The second inequality comes due to $\hat{W}(u,x) > 0$ for all $u, x$. Therefore we get
\begin{equation*}
\int_{0}^{J(x)} \hat{W}(u,x) du < \int_{0}^{\tau_p(x)} \hat{W}(u,x) du = \frac{R}{p}
\end{equation*}
and $x = \hat{W}(J(x), x) <  \hat{W}(\tau_p(x), x)$. This shows that, if $x < x^*_p$, the window size at loss  epochs increases.

\subsubsection*{Step 2:} We now show that if $x > x^*_p$, then $ x^*_p < \hat{W}(\tau_p(x), x) < x$. The proof for $ \hat{W}(\tau_p(x), x) < x$ follows as in the previous proof and hence we do not show it here. 

Now, we prove that if $x > x^*_p$ then $x^*_p < \hat{W}(\tau_p(x), x) $. Suppose $T_1(x)$ denotes the time when $\hat{W}( T_1(x), x) = x^*_p$. From \eqref{eqn:tcp_CUBIC}, we get $T_1(x) = J(x) + \sqrt[3]{\frac{x^* - x}{C}}$. Therefore, 
\begin{equation*}
\int_{0}^{T_1(x)} \hat{W}(u,x) du = \frac{C}{4} \Bigl( (\frac{x^* - x}{C})^{\frac{4}{3}} - J(x)^4 \Bigr) + \Bigl(J(x) + \sqrt[3]{\frac{x^* - x}{C}} \Bigr)x.
\end{equation*}
Substituting $x = \alpha x^*_p$ ($\alpha > 1$) and then using \eqref{eqn:CUBIC_xstar} for $x^*_p$ simplifies the above expression to 
\begin{equation*}
\int_{0}^{T_1(x)} \hat{W}(u,x) du = \frac{R}{p} \Bigl( \frac{(1 - \alpha)^{\frac{4}{3}}}{(4 - \beta) \beta^{\frac{\beta}{3}}} + \alpha^{\frac{4}{3}} + \frac{4 \alpha (1 - \alpha)^{\frac{1}{3}}}{(4 - \beta) \beta^{\frac{\beta}{3}}} \Bigr).
\end{equation*}
Now substitute $\gamma = (\alpha - 1)$, $\gamma > 0$ and use $(4 - \beta) \beta^{\frac{1}{3}} <= 3$ for $\beta \in (0,1)$ to get
\begin{equation*}
\int_{0}^{T_1(x)} \hat{W}(u,x) du < \frac{R}{p} \Bigl(  (1 + \gamma)^{\frac{4}{3}} - \gamma^{\frac{4}{3}} - \frac{4}{3} \gamma^{\frac{1}{3}} \Bigr).
\end{equation*}
Using Lemma \eqref{lemma:mathstackex} below, we get $k \in (1,2)$, $(1+x)^k - x^k - kx^{k-1} < 1$. Therefore we have 
\begin{equation*}
\int_{0}^{T_1(x)} \hat{W}(u,x) du < \frac{R}{p} = \int_{0}^{\tau_p(x)} \hat{W}(u,x)
\end{equation*}
and $x^*_p = \hat{W}(T_1(x),x) <  \hat{W}(\tau_p(x),x)$.

Thus we have show that for any $x > x^*_p$, the window size at drop epochs (just after loss) monotonically decreases to $x^*_p$. Also, for any $x < x^*_p$, since $x < \hat{W}(\tau_p(x))$, the window size at drop epochs (just after loss) either monotonically increases to $x^*_p$ or exceeds $x^*_p$ at some iteration and eventually decreases to $x^*_p$.  Once $\hat{W}$ reaches $x^*_p$ at a drop epoch, $\hat{W}$ at all drop epochs henceforth will be $x^*_p$ and the evolution of $\hat{W}(t)$ becomes periodic.
\end{proof}

\begin{lemma}
\label{lemma:mathstackex}
If $k \in (1,2)$, we have 
\begin{equation*}
(1+x)^k - x^k < 1 + kx^{k-1},
\end{equation*}
for $x > 0$.
\end{lemma}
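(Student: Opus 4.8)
The plan is to reduce the inequality to a monotonicity statement about a single auxiliary function. I would define
\begin{equation*}
\psi(x) = (1+x)^k - x^k - k x^{k-1},
\end{equation*}
so that the claim $(1+x)^k - x^k < 1 + kx^{k-1}$ is exactly $\psi(x) < 1$ for $x > 0$. First I would record the boundary value: since $k - 1 > 0$, both $x^k$ and $kx^{k-1}$ tend to $0$ as $x \to 0^+$, so $\psi$ extends continuously to $x = 0$ with $\psi(0) = 1$. It therefore suffices to show that $\psi$ is strictly decreasing on $(0,\infty)$, for then $\psi(x) < \psi(0) = 1$ for every $x > 0$.

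Next I would compute the derivative,
\begin{equation*}
\psi'(x) = k\bigl[(1+x)^{k-1} - x^{k-1} - (k-1)x^{k-2}\bigr],
\end{equation*}
so that proving $\psi'(x) < 0$ reduces to the inequality $(1+x)^{k-1} - x^{k-1} < (k-1)x^{k-2}$. Here I would use the integral representation $(1+x)^{k-1} - x^{k-1} = (k-1)\int_x^{x+1} t^{k-2}\,dt$, which follows since $\frac{d}{dt}t^{k-1} = (k-1)t^{k-2}$. Because $k < 2$, the exponent $k-2$ is negative, so $t \mapsto t^{k-2}$ is strictly decreasing; hence $t^{k-2} < x^{k-2}$ for all $t \in (x, x+1]$ and the integral is strictly less than $x^{k-2}$. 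Multiplying by $k-1 > 0$ gives $\psi'(x) < 0$, which completes the argument.

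The only delicate point is the behavior at the left endpoint: the term $(k-1)x^{k-2}$ appearing in $\psi'$ blows up as $x \to 0^+$ (since $k-2 < 0$), so one cannot naively extend the derivative analysis to $x = 0$. This is precisely why I would separate the continuity of $\psi$ at $0$ (which is clean, using $k > 1$) from the monotonicity on the open interval $(0,\infty)$; combining a strict interior derivative bound with one-sided continuity at the endpoint is what yields the strict inequality on all of $(0,\infty)$. Everything else is a routine application of the monotonicity of the power function and the fundamental theorem of calculus.
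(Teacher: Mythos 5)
Your proof is correct, but it takes a genuinely different route from the paper's. The paper proves the inequality in an inverted form: it sets $f(y) = (1+y)^k - y^k$, observes that $f''(y) = k(k-1)\bigl[(1+y)^{k-2} - y^{k-2}\bigr] < 0$ for $y > 0$ (since $k < 2$), concludes by strict concavity that $f$ lies strictly below its tangent at $y=0$, i.e.\ $(1+y)^k - y^k < 1 + ky$, and then substitutes $y = 1/x$ and multiplies through by $x^k$ to recover the stated inequality. You instead attack the difference $\psi(x) = (1+x)^k - x^k - kx^{k-1}$ directly, showing $\psi(0^+) = 1$ and $\psi' < 0$ on $(0,\infty)$ via the representation $(1+x)^{k-1} - x^{k-1} = (k-1)\int_x^{x+1} t^{k-2}\,dt$ and the strict monotonicity of $t \mapsto t^{k-2}$. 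Both arguments ultimately rest on the same elementary fact---that $t^{k-2}$ is strictly decreasing when $k<2$---but the packaging differs: in the paper it enters through the second derivative, in yours through a first-derivative estimate via the fundamental theorem of calculus. The paper's concavity-plus-inversion trick is slicker in one respect: the singular behavior of $x^{k-2}$ at the origin is mapped to $y = \infty$ and never arises, since the tangent-line comparison at $y = 0$ is completely clean. Your direct approach must instead, as you correctly note, separate the endpoint limit from the interior derivative bound; your handling of this---continuity of $\psi$ at $0$ using $k > 1$, strict decrease on the open half-line, and strictness of the integral bound from $t^{k-2} < x^{k-2}$ on $(x, x+1]$---is exactly right and yields the strict inequality. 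In compensation, your version needs no substitution and no homogeneity step, and makes the role of each term transparent. Either proof is complete, and both are equally elementary.
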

\begin{proof}
Consider the function $f(y) = (1+y)^k - y^k$. The second derivative of $f$, 
\begin{equation*}
\label{eqn:lemma_eqn}
f^{(2)}(y) = k(k-1)(1 + y)^{k-2} - k(k-1)y^{k-2},
\end{equation*}
is strictly less than $0$ for all $y > 0$. Therefore $f$ is a strict concave function over $(0,\infty)$.

The tangent to the curve $f(y)$ at $y = 0$ is given by $g(y) = 1 + ky$. Now since the function, $f$ is strictly concave in $(0, \infty)$, we have
\begin{equation*}
(1+y)^k - y^k < 1 + ky,
\end{equation*}
for $y > 0$.
Substituting $x = \frac{1}{y}$, we get
\begin{equation*}
(1+x)^k - 1 < x^k + kx^{k-1},
\end{equation*}
for $y > 0$.
Rearranging terms in the above inequality gives us the desired result.
\end{proof}
In Figure \ref{fig:TCP_CUBIC_FP_convergence_2}, we illustrate multiple iterations of the equation $\hat{W}(\tau_p(.))$ for input $x \in (0,100)$. We denote the $k^{th}$ iteration of $\hat{W}(\tau_p(.))$  by $W_p^k(.)$. We see that as the number of iterations, $k$ increases, $W_p^k(.)$ goes close to the fixed point irrespective of the starting point.
\begin{figure}
\centering
\includegraphics[scale=0.25, trim = 90 20 95 10, clip=true]{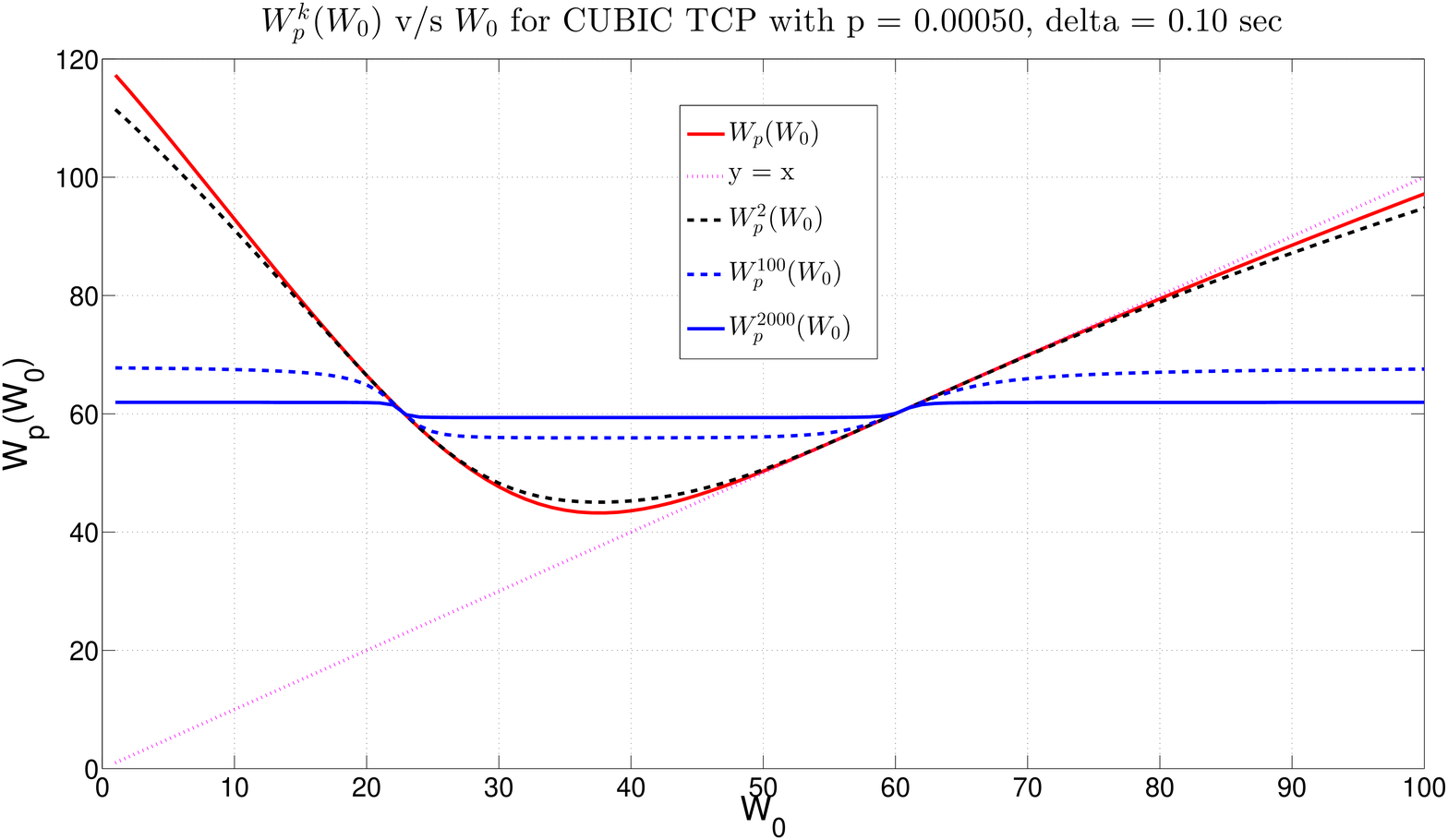}
\caption{Convergence to the fixed point of $\hat{W}(\tau_p(x)) = x$.}
\label{fig:TCP_CUBIC_FP_convergence_2}
\end{figure}

From equation \eqref{eqn:CUBIC_xstar} in Proposition \ref{prop:CUBIC}, the time between consecutive losses converges to
\begin{equation}
\label{eqn:time_between_losses_CUBIC}
\hat{\tau}_p(x^*_p) = \sqrt[3]{\frac{\beta x^*_p}{C}} = \Bigl( \frac{4 \beta R}{(4 - \beta) C p} \Bigr)^{\frac{1}{4}}.
\end{equation}
Thus for the TCP CUBIC deterministic loss model, from equation \eqref{eqn:CUBIC_xstar}, the window size at drop epochs converges to $C_1 p^{-\frac{3}{4}}$ and from equation \eqref{eqn:time_between_losses_CUBIC}, the time between consecutive losses converges to $C_2 p^{-\frac{1}{4}}$ for some constants $C_1$ and $C_2$. These are key insights which we will use in Section \ref{sec:markovmodel}, where we derive an expression for average window size under random losses.
\section{Model with Infinite Maximum Window Size}
\label{sec:markovmodel}
We consider a single TCP connection with constant RTT, i.e., negligible queuing. We assume that the packets are dropped independently with probability $p$.  We have analyzed this system using Markov chains in \cite{Sudheer2011}. In \cite{Sudheer2011}, we derive expressions for average window size numerically when the window size $W_n$ is bounded by some $W_{max} < \infty$. We now derive an approximate expression for average window size for low packet error rates assuming $W_{max} = \infty$. 

Let $W_n(p)$ denote the window size at the end of the $n^{th}$ RTT. Let $W_n^{\prime}(p)$ denote the window size at the last drop epoch before $n$ (excluding time epoch $n$) and let $T_n(p)$ be the time elapsed between the last drop epoch before $n$ and time $n$. As in the deterministic loss model case, we ignore the Reno mode of operation and consider \eqref{eqn:tcp_CUBIC} for window evolution. The process $\{W_n^{\prime}(p),T_n(p)\}$ forms a Markov chain. We show that for $p \in (0,1)$, the processes $\{W_n(p)\}$ and $\{W_n^{\prime}(p)\}$ have unique stationary distributions.

\begin{proposition}
\label{prop:stat_dist_conv_nz_p}.
For any fixed $p\in (0,1)$ the Markov chain $\{W_n^{\prime}(p),T_n(p)\}$ has a single aperiodic, positive recurrent class with remaining states being transient. Hence it has a unique stationary distribution.
\end{proposition}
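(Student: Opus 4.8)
The plan is to analyze the chain in two layers: first the embedded \emph{loss-epoch} chain $\{W_k'(p)\}$, which records the peak window at successive drop epochs, and then to lift its recurrence to the full RTT-epoch chain $\{(W_n'(p),T_n(p))\}$ by controlling the inter-loss times. I treat the state space as countable (window sizes in packets, $T$ in RTTs), so that ``single aperiodic positive recurrent class, remaining states transient'' is exactly the conclusion of the standard classification theorem once I exhibit a positive recurrent, aperiodic atom reachable from every state; for a continuous-window version one replaces the atom by a small set and invokes the Meyn--Tweedie drift criterion, with identical estimates. First I would record the dynamics of $\{W_k'(p)\}$: writing $w$ for the current peak, after a drop the window restarts at $(1-\beta)w$ and follows \eqref{eqn:tcp_CUBIC} with plateau $w$, so (the cube being monotone) the post-drop window $w_j$ at the $j$-th subsequent RTT satisfies $(1-\beta)w \le w_j \le w + C(jR)^3$. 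Surviving $j$ RTTs without a loss has probability $(1-p)^{\sum_{i<j} w_i}$, and the next peak is $W_{k+1}'=w_J$, where $J$ is the first lossy RTT.

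For irreducibility and the atom, note that a loss in the very first post-drop RTT sets $W_{k+1}'=(1-\beta)w$ with probability $1-(1-p)^{(1-\beta)w}>0$; iterating, consecutive first-RTT losses drive the peak geometrically down to the minimal admissible window $m$. Hence the post-drop minimal state $s^\star$ is accessible from every state, the recurrent class is the communicating class of $s^\star$, and any state that reaches $s^\star$ but is not reached back from it is transient---precisely the asserted decomposition. Aperiodicity is the easy part: at the minimal window both a loss (probability $\ge p$) and a loss-free RTT (probability $(1-p)^{w}>0$) are possible, so $s^\star$ admits return paths of two consecutive lengths, forcing $\gcd=1$.

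The crux is positive recurrence of $\{W_k'(p)\}$ via Foster's criterion with the linear Lyapunov function $V(w)=w$. The \textbf{main obstacle} is to rule out that the rare long-survival excursions---on which the cubic growth makes the window large---contribute non-negligibly to the drift. The resolution is a super-exponential-versus-polynomial comparison: since $w_i \ge (1-\beta)w$ we get $\sum_{i<j} w_i \ge j(1-\beta)w$, so survival to RTT $j$ is at most $\rho^j$ with $\rho=(1-p)^{(1-\beta)w}$, and with the bound on $w_j$,
\[
\mathbb{E}[W_{k+1}'(p)\mid W_k'(p)=w] \;\le\; (1-\beta)w + \sum_{j\ge 1}\bigl(w + C(jR)^3\bigr)\rho^j \;=\; (1-\beta)w + o(w),
\]
because $w\rho\to 0$ super-fast as $w\to\infty$. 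Thus the drift $\mathbb{E}[W_{k+1}'(p)-W_k'(p)\mid W_k'(p)=w]\le -\beta w + o(w)\le -1$ outside a finite set, and Foster's criterion gives positive recurrence of the embedded chain.

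Finally I would lift this to the full chain. Because the window is always at least one packet, each RTT carries a loss with probability at least $p$, so every inter-loss time is stochastically dominated by a \emph{Geometric}$(p)$ variable and has mean at most $1/p$, \emph{uniformly} in the state. The full chain regenerates at $s^\star$, and its expected return time in RTTs is at most $(1/p)\,\mathbb{E}[N]$, where $N$ is the number of loss epochs in one return cycle of $\{W_k'(p)\}$; the latter is finite by the positive recurrence just established. Hence $s^\star$ is positive recurrent for $\{(W_n'(p),T_n(p))\}$, it is reachable from every state and aperiodic, and the asserted decomposition together with uniqueness of the stationary distribution follows.
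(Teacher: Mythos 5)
Your architecture is sound and reaches the correct conclusion, but by a genuinely different route from the paper. The paper works directly with the two-dimensional chain $\{W_n^{\prime}(p),T_n(p)\}$ and applies Foster's criterion with the Lyapunov function $L(z,d)=z+d^4$, showing the one-step drift is $\leq -\epsilon$ outside a finite set (the $d^4$ term is what tames the cubic growth in the elapsed time $d$); aperiodicity comes from the self-loop at $(1,0)$ with probability $p$, and the reachability-plus-transience decomposition is the same as yours. You instead prove positive recurrence of the embedded loss-epoch chain with the \emph{linear} Lyapunov function $V(w)=w$, exploiting the super-exponential survival bound $\rho^j$ with $\rho=(1-p)^{(1-\beta)w}$, and then lift to RTT epochs via uniform stochastic domination of inter-loss times by a Geometric$(p)$ variable (window $\geq 1$, so each RTT is lossy with probability $\geq p$) together with the Wald-type bound $\mathbb{E}\bigl[\sum_{k\leq N}G_k\bigr]\leq p^{-1}\mathbb{E}[N]$. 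Your two-layer argument buys a simpler Lyapunov function and in fact mirrors machinery the paper only deploys later (geometric domination of inter-loss times reappears in the proof of Proposition \ref{prop:CUBIC_finiteEW}); the paper's one-shot two-dimensional drift buys the avoidance of both the lifting step and the need to observe that the loss-epoch chain is itself Markov.

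One step as written is wrong, though repairably so: the displayed bound
\[
\mathbb{E}[W_{k+1}^{\prime}(p)\mid W_k^{\prime}(p)=w]\;\leq\;(1-\beta)w+\sum_{j\geq 1}\bigl(w+C(jR)^3\bigr)\rho^j
\]
does not follow from $w_j\leq w+C(jR)^3$, and is in fact false for large $w$. Expanding the cube with $K=\sqrt[3]{\beta w/C}$ gives $w_j=(1-\beta)w+C(jR)^3-3C(jR)^2K+3CjRK^2$, and the cross term $3CjRK^2=\Theta\bigl(jw^{2/3}\bigr)$ is incurred on the event $J=1$, whose probability tends to $1$ as $w\to\infty$; hence the true conditional mean is $(1-\beta)w+\Theta(w^{2/3})$, whereas your right-hand side is $(1-\beta)w+o(1)$ because $w\rho\to 0$ super-exponentially. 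The fix is routine: keep the full expansion, note $\mathbb{E}[J^3]$ is bounded in $w$ by your own geometric domination, and conclude $\mathbb{E}[W_{k+1}^{\prime}(p)\mid W_k^{\prime}(p)=w]\leq(1-\beta)w+O(w^{2/3})$, so the drift is still $-\beta w+o(w)$ and Foster's criterion applies unchanged. With that correction, and the (implicit but easy) check that the expected jump from states inside the finite set is finite, your proof goes through.
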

\begin{proof}
From any state in the state space of the Markov chain $\{W_n^{\prime}(p),T_n(p)\}$, a sequence of packet losses will cause the Markov chain to hit the state $(1,0)$. Therefore, the state $(1,0)$ in the state space of the Markov chain $\{W_n^{\prime}(p),T_n(p)\}$ is reachable from any state in the state space with non-zero probability. The states that can be reached by $(1,0)$ form a communicating class. The remaining states in the state space are transient, since from any of these states there is a non-zero probability of hitting $(1,0)$ and hence a non-zero probability of never returning back. 

We now show that the communicating class containing $(1,0)$ is positive recurrent. For convenience, we drop the $p$ from our notation. For a state $(z,d)$ in the communicating class, we define the Lyapunov function, $L(z,d) = z + d^4.$ The conditional one step drift of the Lyapunov function is given by
\begin{equation}
\label{eqn:CUBIC_1_drift}
\begin{aligned}
\mathbb{E}&[L(W_{n+1}^{\prime}, T_{n+1})  - L(W_n^{\prime},T_n)| (W_n^{\prime}, T_n) = (z,d)] \\
&= (z + (d+1)^4) q(z,d) + (C(Rd - \sqrt[3]{\frac{\beta z}{C}})^3 + z) (1 - q(z,d)) - z - d^4\\
&= -d^4 + (d+1)^4 q(z,d) + C(Rd-\sqrt[3]{\frac{\beta z}{C}})^3 (1 - q(z,d)),
\end{aligned}
\end{equation}
where $q(z,d) = (1-p)^{C(Rd-\sqrt[3]{\frac{\beta z}{C}})^3 + z}$ is the probability of no loss in the $n^{th}$ RTT. 

Let us denote the one-step drift in the Lyapunov function defined in \eqref{eqn:CUBIC_1_drift} by $f(z,d)$. The quantity $C(Rd - \sqrt[3]{\frac{\beta z}{C}})^3 + z \geq (1 - \beta) z$. Therefore we have $q(z,d) \leq (1 - p)^{(1 - \beta) z}$ and $-C(Rd - \sqrt[3]{\frac{\beta z}{C}})^3 \leq \beta z$. Also, $d + 1 \leq 2d$, $\forall d \in \{ 1, 2, \cdots \}$. Thus for the one-step drift, we have 
\begin{equation}
\label{eqn:CUBIC_1_drift-1}
\begin{split}
f(z,d) &= - d^4 + C(Rd - \sqrt[3]{\frac{\beta z}{C}})^3 + ((d + 1)^4  - C(Rd -\sqrt[3]{\frac{\beta z}{C}})^3) q(z,d) \\
&\leq  -d^4 + C(Rd)^3  + (16d^4 + \beta z)(1 - p)^{(1 - \beta) z}. 
\end{split}
\end{equation}
For some $\epsilon > 0$, we can choose $z^*$ such that $\beta z (1-p)^{(1-\beta)z} < \epsilon$ and $16 (1-p)^{(1-\beta)z} \leq \frac{1}{2}$,  $\forall z > z^*$. Thus we have, $\forall d > 0, z > z^*$
\begin{equation*}
\label{eqn:CUBIC_1_drift-2}
f(z,d) \leq -\frac{1}{2} d^4 + CR^3d^3  + \epsilon.
\end{equation*}
We can choose $d^*$, such that $-\frac{1}{2} d^4 + CR^3d^3 < -2 \epsilon $, for all $d > d^*$. Therefore $\forall d > d^*$ and $z > z^*$, $f(z,d) < -\epsilon$. 

Consider the first equation in \eqref{eqn:CUBIC_1_drift-1}. The term $q(z,d) = (1-p)^{C(Rd-\sqrt[3]{\frac{\beta z}{C}})^3 + z}$ falls exponentially in $z$, for any fixed $d$ and falls super-exponentially in $d$, for any fixed $z$. Hence, for any fixed $z$, as $d \rightarrow \infty$, the term $((d + 1)^4  - C(Rd -\sqrt[3]{\frac{\beta z}{C}})^3) q(z,d)$  $\rightarrow 0$. As $d^4$ is asymptotically larger than any polynomial of degree $ < 4$, for any fixed $z \leq z^*$, we can choose $t(z)$ such that $f(z,d) < -\epsilon$, $\forall d > t(z)$. Similarly, for a fixed $d$, as $z \rightarrow \infty$, the term $((d + 1)^4  - C(Rd -\sqrt[3]{\frac{\beta z}{C}})^3) q(z,d) \rightarrow 0$. For a fixed $d$, for $z$ large, $\sqrt[3]{\frac{\beta z}{C}} > Rd$. Therefore, for any fixed $d \leq d^*$, we can choose $w^{\prime}(d)$ such that $f(z,d) < -\epsilon$, $\forall z > w^{\prime}(d)$. 

Thus the one-step drift, $f(z,d) < -\epsilon$ outside of a finite set for some $\epsilon > 0$. Thus by mean drift criteria for positive recurrence, the communicating class containing $(1,0)$ is positive recurrent. Also, this class is aperiodic as the state $(1,0)$ has a non-zero  probability of hitting itself in one step (a self-loop with probability $p$).
\end{proof}

We have shown above that the Markov chain, $\{W_n^{\prime}(p)$ $, T_n(p)\}$ has a unique stationary distribution. 
Let $V_k(p)$ denote the window size at the $k^{th}$ loss epoch (just after loss) and let $G_{V_k(p)}^p$ denote the time between the $k^{th}$ and $(k+1)^{st}$ loss epoch. The following corollary is a consequence of Proposition \ref{prop:stat_dist_conv_nz_p}.

\begin{corollary}
For any $p \in (0,1)$, the processes $\{W_n(p)\}$, $\{V_k(p)\}$ and $\{G_{V_k(p)}^p\}$ have unique stationary distributions.
\end{corollary}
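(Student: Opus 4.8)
The plan is to exhibit all three processes as deterministic functions of the Markov chain $\{(W_n'(p),T_n(p))\}$ — or of an embedded chain derived from it — and then transfer the unique-stationary-distribution conclusion of Proposition~\ref{prop:stat_dist_conv_nz_p} to each. The principle I would invoke is standard: if $\{X_n\}$ is positive recurrent, aperiodic with a single recurrent class (hence ergodic with a unique stationary law $\pi$), then for any measurable sliding-window map $g$ the process $Y_n = g(X_n,\dots,X_{n+m})$ is stationary and ergodic when $X$ is started from $\pi$, and by the ergodic theorem its empirical distribution converges a.s.\ to a single limit from every initial condition; that limit is the unique stationary distribution of $\{Y_n\}$ (which need not itself be Markov). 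This reading also fixes the sense of ``unique stationary distribution'' for the non-Markov functionals below.

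First I would handle $\{W_n(p)\}$. From the window-evolution rule appearing in the drift computation \eqref{eqn:CUBIC_1_drift}, the window at RTT epoch $n$ is the deterministic function $W_n = h(W_n',T_n)$ with $h(z,d) = C(Rd - \sqrt[3]{\beta z/C})^3 + z$. Hence $W_n = h(X_n)$ is a one-step function of the chain, its stationary law is the pushforward $\pi \circ h^{-1}$, and uniqueness follows from the ergodic principle above.

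Next I would treat $\{V_k(p)\}$ through the embedded chain at loss epochs. The set $A = \{(z,d):d=0\}$ is entered exactly one RTT after each loss, so watching $\{(W_n',T_n)\}$ on $A$ yields precisely the sequence of post-loss states, and $V_k$ is the window value carried by this state. Because $(1,0)\in A$ lies in the recurrent class and the per-RTT loss probability is strictly positive, $A$ is visited infinitely often; thus the induced (watched) chain on $A$ is well defined, inherits positive recurrence and aperiodicity from the parent chain, and has stationary law $\pi(\cdot\mid A)$, giving $\{V_k\}$ a unique stationary distribution. Finally, for $\{G_{V_k}^p\}$ I would use that $h(z,\cdot)$ is strictly increasing in $d$: the inter-loss gap $G_{V_k}$ is the unique integer $d$ with $V_{k+1} = h(V_k,d)$, hence a deterministic function of the consecutive pair $(V_k,V_{k+1})$ of the ergodic induced chain. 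Applying the sliding-window principle with $m=1$ then furnishes a unique stationary distribution for $\{G_{V_k}^p\}$.

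The main obstacle I anticipate is the embedded-chain step: making it precise that watching $\{(W_n',T_n)\}$ on $A$ reproduces exactly $\{V_k\}$, and invoking the correct induced-Markov-chain theorem to obtain positive recurrence and a \emph{unique} stationary law for the embedded chain rather than mere existence. A secondary subtlety is interpretive — since $\{W_n\}$ and $\{G_{V_k}^p\}$ are functionals of the chain and not Markov in their own right, I would state explicitly that ``unique stationary distribution'' here means the unique a.s.\ limit of empirical distributions guaranteed by ergodicity.
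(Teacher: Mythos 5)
Your proposal is correct, but it takes a genuinely different route from the paper's. The paper argues via regenerative processes: successive visits of $\{(W_n^{\prime}(p),T_n(p))\}$ to the state $(1,0)$ are regeneration epochs simultaneously for $\{W_n(p)\}$, $\{V_k(p)\}$ and $\{G_{V_k(p)}^p\}$; positive recurrence from Proposition \ref{prop:stat_dist_conv_nz_p} gives $\mathbb{E}[\tau_{1,0}(p)]<\infty$, and since the number $\tau_V(p)$ of loss epochs per cycle satisfies $\tau_V(p)\leq\tau_{1,0}(p)$, the loss-epoch processes also have finite mean cycle lengths; uniqueness of the stationary distributions then follows for all three processes at once from regenerative-process theory, with no need to decide which of them are Markov or to construct any induced chain. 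You instead realize each process as a deterministic functional of an ergodic chain: $W_n=h(W_n^{\prime},T_n)$ directly, $\{V_k\}$ through the chain watched on $A=\{(z,d):d=0\}$ (note $V_k$ is $(1-\beta)$ times the window value carried by the watched state --- a harmless bijection that does not affect your argument), and $G_{V_k(p)}^p$ through strict monotonicity of the cubic in the elapsed time, recovered from the consecutive pair $(V_k,V_{k+1})$. Your watched-chain step is precisely where the paper's cycle-counting reappears: the standard proof that the induced chain on $A$ is positive recurrent with stationary law $\pi(\cdot\mid A)$ rests on bounding the number of visits to $A$ per $(1,0)$-cycle by the cycle length, which is exactly the paper's inequality $\tau_V(p)\leq\tau_{1,0}(p)$. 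What your route buys is an explicit identification of the limiting laws ($\pi\circ h^{-1}$ for $\{W_n\}$ and $\pi(\cdot\mid A)$ for the embedded chain); what it costs is invoking the induced-chain theorem and fixing an ergodic-theoretic meaning of ``unique stationary distribution'' for the non-Markov functionals $\{W_n\}$ and $\{G_{V_k(p)}^p\}$ --- an interpretive point you handle explicitly and the regenerative framing sidesteps, since a regenerative process with finite mean cycle length has a unique stationary version by construction. (One inessential remark: aperiodicity of the watched chain, which you would get from the self-loop at $(1,0)$ --- the same fact the paper uses --- matters only for convergence of the marginals, not for uniqueness of the stationary law.)
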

\begin{proof}
For the process $\{W_n^{\prime},T_n\}$, consider the inter-visit times to state $(1,0)$. These epochs are regeneration epochs for the process $\{W_n^{\prime},T_n\}$ as well as for the processes $\{W_n(p)\}$, $\{V_k(p)\}$ and $\{G_{V_k(p)}^p\}$. From Proposition \ref{prop:stat_dist_conv_nz_p}, $(1,0)$ is positive recurrent. Therefore, the mean regeneration cycle length, $\mathbb{E}[\tau_{1,0}(p)]$, (for the $\{W_n^{\prime},T_n\}$ process) is finite. Since $E[\tau_{1,0}(p)]$ is also the mean regeneration cycle length for the $\{W_n(p)\}$ process, the $\{W_n(p)\}$ process has a unique stationary distribution. The regeneration cycle length for the processes $\{V_k(p)\}$ and $\{G_{V_k(p)}^p\}$ (denoted by $\tau_V(p)$) is given by the number of loss epochs between two consecutive visits to state $(1,0)$. Since in each regeneration cycle, $\tau_V(p) \leq \tau_{1,0}(p)$, we get $\mathbb{E}[\tau_V(p)] < \infty$. Hence, the processes $\{V_k(p)\}$ and $\{G_{V_k(p)}^p\}$ also have unique stationary distributions.
\end{proof}

In Proposition \ref{prop:CUBIC_finiteEW}, we show that, for $p \in(0,1)$ TCP window size under stationarity has finite mean.
\begin{proposition}
\label{prop:CUBIC_finiteEW}
For fixed $p \in (0,1)$ the mean window size is finite, i.e., $\mathbb{E}[W(p)]$ $< \infty$ under stationarity.
\end{proposition}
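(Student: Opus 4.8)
The plan is to reduce the claim to a moment bound for the Markov chain $\{W_n^{\prime}(p),T_n(p)\}$ studied in Proposition \ref{prop:stat_dist_conv_nz_p}, and then to sharpen the Lyapunov drift computed there. First I would express the window at the end of an RTT as a function of the chain state $(z,d) = (W_n^{\prime},T_n)$, namely $W(z,d) = C\bigl(Rd - \sqrt[3]{\tfrac{\beta z}{C}}\bigr)^3 + z$, and use $Rd - \sqrt[3]{\beta z/C} \le Rd$ to get the crude bound $W(z,d) \le CR^3 d^3 + z$. Hence $\mathbb{E}_\pi[W] \le CR^3\,\mathbb{E}_\pi[d^3] + \mathbb{E}_\pi[z]$, so it suffices to prove $\mathbb{E}_\pi[z + d^3] < \infty$ under the stationary law $\pi$ of the chain.

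The heart of the argument is to upgrade the drift estimate of Proposition \ref{prop:stat_dist_conv_nz_p} from ``bounded above by $-\epsilon$'' to ``bounded above by $-c(z+d^3)$'' outside a bounded set, using the same Lyapunov function $L(z,d) = z + d^4$. Recall from \eqref{eqn:CUBIC_1_drift} that the drift is $f(z,d) = -d^4 + (d+1)^4 q(z,d) + C\bigl(Rd - \sqrt[3]{\tfrac{\beta z}{C}}\bigr)^3(1 - q(z,d))$. I would split the state space along the curve $Rd = \sqrt[3]{\beta z/C}$. In the region $Rd > \sqrt[3]{\beta z/C}$ one has $z < \tfrac{CR^3}{\beta} d^3$, so $z + d^3 = O(d^3)$; there the cubic term is at most $C(Rd)^3$ and is dominated by $-d^4$, giving $f(z,d) \le -\tfrac12 d^4 \le -c(z+d^3)$ once $d$ is large (the no-loss probability $q$ being super-exponentially small since $W(z,d)$ is then large). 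In the complementary region $Rd \le \sqrt[3]{\beta z/C}$ the window never exceeds $z$, so $W(z,d) \ge (1-\beta)z$; thus for $z$ large $q(z,d) \le (1-p)^{(1-\beta)z} \le \tfrac12$, whence the now-negative cubic term obeys $C\bigl(Rd - \sqrt[3]{\tfrac{\beta z}{C}}\bigr)^3(1-q) \le \tfrac12 C\bigl(Rd - \sqrt[3]{\tfrac{\beta z}{C}}\bigr)^3$, which is of order $-\beta z$ when $Rd$ stays bounded away from $\sqrt[3]{\beta z/C}$ and is reinforced by the $-d^4$ term otherwise; since $d^3 = O(z)$ throughout this region, both cases yield $f(z,d) \le -c(z+d^3)$. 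Folding the small-$z$ and small-$d$ exceptions into a single bounded set $K$ gives the strengthened drift $f(z,d) \le -c(z+d^3) + b\,\mathds{1}_K(z,d)$ for constants $c>0$ and $b<\infty$.

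With this drift in hand I would invoke the Foster--Lyapunov comparison (moment) criterion: since the chain is positive recurrent and aperiodic by Proposition \ref{prop:stat_dist_conv_nz_p} and $K$ is bounded, the inequality $PL - L \le -c(z+d^3) + b\,\mathds{1}_K$ forces $\mathbb{E}_\pi[z + d^3] \le b/c < \infty$. Combining this with the reduction of the first paragraph yields $\mathbb{E}[W(p)] = \mathbb{E}_\pi[W] \le CR^3\,\mathbb{E}_\pi[d^3] + \mathbb{E}_\pi[z] < \infty$, which is the assertion.

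I expect the main obstacle to lie in the region $Rd \le \sqrt[3]{\beta z/C}$, that is, in controlling the $z$-direction. Proposition \ref{prop:stat_dist_conv_nz_p} only needed the drift there to be below $-\epsilon$, so it sufficed that the cubic term is eventually negative; here I must quantify that it is of order $-\beta z$ and, more delicately, stitch this estimate together with the $-d^4$ term uniformly across the transition zone $Rd \approx \sqrt[3]{\beta z/C}$, where neither term alone is $\Theta(z+d^3)$. Some care is also needed to verify that the exceptional set $K$ is genuinely bounded (hence small/petite), so that the comparison criterion applies.
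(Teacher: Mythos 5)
Your proposal is correct in substance but takes a genuinely different route from the paper's proof. The paper never strengthens the drift on the two-dimensional chain: instead it (i) reduces $\mathbb{E}[W(p)]<\infty$ to $\mathbb{E}[V(p)]<\infty$ for the one-dimensional chain $\{V_k(p)\}$ embedded at loss epochs, via the regenerative ratio formula over returns to $(1,0)$, the bound $W_n\leq V_k/(1-\beta)$ within a cycle, and i.i.d.\ geometric inter-loss times to decouple the cycle sum; and (ii) verifies Tweedie's criterion $\mathbb{E}[V_1\mid V_0=i]\leq(1-\delta)i$ not for $\{V_k\}$ itself but for a stochastically dominating chain $\{\mathcal{V}_n\}$ whose per-RTT loss probability is frozen at $1-(1-p)^i$, so that the inter-loss time is exactly geometric and its first three moments are available in closed form, yielding $\mathbb{E}[\mathcal{V}_1\mid\mathcal{V}_0=i]\leq(1-\beta^2)i$. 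You stay on the RTT-level chain $(W_n^{\prime},T_n)$, upgrade the estimate of Proposition \ref{prop:stat_dist_conv_nz_p} to $PL-L\leq -c(z+d^3)+b\,\mathds{1}_K$, and invoke the comparison form of the Foster--Lyapunov theorem to get $\pi(z+d^3)\leq b/c$, whence $\mathbb{E}_\pi[W]\leq CR^3\mathbb{E}_\pi[d^3]+\mathbb{E}_\pi[z]<\infty$; this is sound, and your worry about the transition zone resolves cleanly: when $\tfrac12\sqrt[3]{\beta z/C}<Rd\leq\sqrt[3]{\beta z/C}$ one has $z\asymp d^3$, so the $-d^4\asymp -z^{4/3}$ term alone dominates $c(z+d^3)$; when $Rd\leq\tfrac12\sqrt[3]{\beta z/C}$ the cubic term contributes at most $-\beta z/16$ once $q(z,d)\leq\tfrac12$; and when $Rd>\sqrt[3]{\beta z/C}$ either $Rd-\sqrt[3]{\beta z/C}\geq Rd/2$ or $z\geq C(Rd)^3/(8\beta)$, so $W(z,d)\geq c_1d^3$ uniformly and $(d+1)^4q(z,d)$ vanishes super-exponentially. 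The exceptional set is indeed bounded, since $Rd>\sqrt[3]{\beta z/C}$ with $d\leq d^*$ forces $z<CR^3(d^*)^3/\beta$, and $Rd\leq\sqrt[3]{\beta z/C}$ with $z\leq z^*$ forces $d^3\leq\beta z^*/(CR^3)$. Two caveats you should make explicit: apply the comparison theorem to the chain restricted to its unique positive recurrent class (where $\pi$ lives and finite sets are petite), and derive $\pi(z+d^3)\leq b/c$ via the return-time/truncation form of the comparison theorem rather than by integrating the drift inequality against $\pi$, since $\pi(L)<\infty$ is not known a priori. Comparatively, your route avoids the dominating-process construction, the geometric-moment computations and the cycle/Wald manipulation, and it delivers strictly more (finite stationary moments of the full state, e.g.\ $\mathbb{E}_\pi[T^3]$), while the paper's reduction buys a one-dimensional chain with fully explicit, closed-form drift estimates.
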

\begin{proof}
Let us denote by $V_k(p)$, the window size at the $k^{th}$ congestion epoch, just after loss. For any RTT epoch, $n$, occurring between the $(k-1)^{st}$  and $k^{th}$ loss epoch, $W_{n}(p) \leq \frac{V_{k}(p)}{(1- \beta)}$. Consider the process $\{\hat{W}_n(p)\}$, with, $\hat{W}_n(p) = \frac{V_k(p)}{1 - \beta}$ for RTT epoch $n$, occurring between the $(k-1)^{st}$  and $k^{th}$ loss epoch. Then,
\begin{equation*}
\mathbb{E}[\sum_{k=1}^{\tau_{1,0}(p)} W_k(p) | W_0 = 1, T_0 = 0] \leq \mathbb{E}[\sum_{k=1}^{\tau_{1,0}(p)} \hat{W}_k(p)| W_0 = 1, T_0 = 0].
\end{equation*}
Let $\{T_k(p)\}$ be i.i.d. with distribution, $\mathbb{P}(T_k(p) = m) = (1-p)^m p$, for $m = 1, 2, \cdots $, and independent of $\{W_n(p)\}$ process. Then,
\begin{equation*}
\begin{split}
\mathbb{E}[\sum_{k=1}^{\tau_{1,0}(p)} \hat{W}_k(p)| W_0 = 1, T_0 = 0] & \leq \frac{1}{1-\beta} \mathbb{E}[\sum_{k=0}^{\tau_{V}(p)} T_k(p) V_k(p)| W_0 = 1, T_0 = 0] \\
& = \frac{1}{1-\beta} \mathbb{E}[\tau_{V}(p)] \mathbb{E}[T_1(p) V(p)] \\
& = \frac{1}{1-\beta} \mathbb{E}[\tau_{V}(p)] \mathbb{E}[T_1(p)] \mathbb{E}[V(p)].
\end{split}
\end{equation*}
The mean window size under stationarity is given by 
\begin{equation*}
\mathbb{E}[W(p)] = \frac{\mathbb{E}[\sum_{k=1}^{\tau_{1,0}(p)} W_k(p)| W_0 = 1, T_0 = 0] }{\mathbb{E}[\tau_{1,0}(p)]}.
\end{equation*}
Since the state $(1,0)$ (for  the $\{W_n^{\prime}, T_n\}$ process) is positive recurrent, $\mathbb{E}[\tau_{1,0}(p)] < \infty$. Thus to show $\mathbb{E}[W(p)] < \infty$, it is sufficient to show that  $\mathbb{E}[V(p)] < \infty$ under stationarity.

Since $\{V_n\}$ is a countable state space Markov chain, using the result for finiteness of stationary moments in \cite{Tweedie1983} we have, $\{V_n\}$ has finite mean if
\begin{equation}
\label{eqn:oneTweedie-CUBIC}
\sup_{i \in A} \mathbb{E}[V_1 | V_0 = i] < \infty, 
\end{equation}
and there is a $\delta > 0$ such that
\begin{equation}
\label{eqn:twoTweedie-CUBIC}
\mathbb{E}[V_1 | V_0 = i] \leq (1 - \delta) i, 
\end{equation}
for all $i \in A^c$, where $A$ is a finite set. 

Instead of showing that \eqref{eqn:oneTweedie-CUBIC} and \eqref{eqn:twoTweedie-CUBIC} holds for $\{V_n\}$ process, we will show that these equations hold for a process $\{\mathcal{V}_n\}$ such that $\mathcal{V}_n$ is stochastically larger than $V_n$ for each $n$, when $\mathcal{V}_0 = V_0$ . This would establish finiteness of expectation of the $\{V_n\}$ and $\{W_n\}$ stochastic processes.

\subsubsection*{Construction of $\{\mathcal{V}_n\}$ process}
Given the initial window size, $V_0$, the window size, $V_1$ at the first congestion epoch (just after loss) depends on the time at which the first congestion happens. The time between two congestion epochs $\overline{\tau}(i)$ is a random variable which depends on the initial window size. The probability mass function for $\overline{\tau}(i)$ given that the initial window size is $i$ is,

\begin{equation*}
P(\overline{\tau}(i) = m) = (1 - p)^i (1 - p)^{i_1}  \cdots (1 - (1 - p)^{i_{m-1}}),
\end{equation*}
where $m > 0$ and $i_m = (1-\beta) C\Bigl(Rm - \sqrt[3]{\frac{\beta i}{C(1 - \beta)}}\Bigr)^3 + i$ denotes the window size at the end of the $m^{th}$ RTT when the window process does not undergo any losses for $m$ RTTs.

We now define the process $\{\mathcal{V}_n\}$. Suppose $\mathcal{V}_0 = i$, then $\mathcal{V}_1$ is given by
\begin{equation*}
\mathcal{V}_1 = (1-\beta) C\Bigl(\tau(i)m - \sqrt[3]{\frac{\beta i}{C(1 - \beta)}}\Bigr)^3 + i,
\end{equation*}
where the probability distribution function of $\{\mathcal{\tau}(i)\}$ is given by
\begin{equation*}
\mathbb{P}(\tau(i) = m) = q_i^{m-1} (1 - q_i),
\end{equation*}
with $q_i = (1 -p)^i$. Thus, the probability of packet being dropped is smaller for the $\{\mathcal{V}_n\}$ process. The random variable $\mathcal{\tau}(i)$ is stochastically larger than $\overline{\tau}(i)$, i.e., $P(\mathcal{\tau}(i) > x) \geq P(\overline{\tau}(i) > x)$ for all $x \in \{0, 1, 2, 3, \cdots \}$. As the inter-congestion epoch is stochastically larger for $\{\mathcal{V}_n\}$ process, we have $\mathbb{E}[V_1 | V_0 = i] \leq \mathbb{E}[\mathcal{V}_1 | \mathcal{V}_0 = i]$. Thus it is sufficient to prove \eqref{eqn:oneTweedie-CUBIC} and \eqref{eqn:twoTweedie-CUBIC} for $\{\mathcal{V}_n\}$.

For the $\mathcal{V}_n$ process, we have
\begin{equation}
\label{eqn:CUBIC_finiteness_1}
\begin{split}
\mathbb{E}[\mathcal{V}_1 | \mathcal{V}_0 = i] & = (1 - \beta) C E[ \tau(i)^3 R^3 - 3 \tau(i)^2 R^2 K_i + 3\tau(i) R K_i^2 - K_i^3] + i,
\end{split}
\end{equation}
where $K_i = \sqrt[3]{\frac{\beta i}{C(1-\beta)}}$. The random variable $\tau(i)$ is geometric with parameter $q_i$ and its moments are given by the following equations. 
\begin{equation*}
\label{eqn:first_moment_geom}
\mathbb{E}[\tau(i)] = \frac{1}{1-q_i},
\end{equation*}

\begin{equation*}
\label{eqn:second_moment_geom}
\mathbb{E}[\tau(i)^2] = \frac{1 + q_i}{(1 - q_i)^2},
\end{equation*}
and
\begin{equation*}
\label{eqn:third_moment_geom}
\mathbb{E}[\tau(i)^3] = \frac{1}{1-q_i} + \frac{6q_i}{(1-q_i)^3}.
\end{equation*}
The above moments are substituted in \eqref{eqn:CUBIC_finiteness_1} to obtain
\begin{equation*}
\label{eqn:CUBIC_finiteness_2}
\begin{split}
\mathbb{E}[\mathcal{V}_1 | \mathcal{V}_0 = i] &= (1-\beta) \Biggl( C R^3 \biggl( \frac{1}{1-q_i} + \frac{6q_i}{(1-q_i)^3} \biggr) \\
&\hspace*{0.5cm}- 3CR^2 \Bigl(\frac{\beta i}{C(1-\beta)} \Bigr)^{\frac{1}{3}} \frac{1+q_i}{(1-q_i)^2} \\
&\hspace*{0.5cm}+ 3CR \Bigl(\frac{\beta i}{C(1-\beta)} \Bigr)^{\frac{2}{3}} \frac{1}{1-q_i} \Biggr) - \beta i + i \\
& \leq (1-\beta) \Biggl( C R^3 \biggl( \frac{1}{1-q_i} + \frac{6q_i}{(1-q_i)^3} \biggr) \\  
&\hspace*{0.5cm}+  3CR \Bigl(\frac{\beta i}{C(1-\beta)} \Bigr)^{\frac{2}{3}} \frac{1}{1-q_i} \Biggr) - \beta i + i.
\end{split}
\end{equation*}
As $i \rightarrow \infty$, $q_i \rightarrow 0$ if $p < 1$. Hence we can choose an $i^*$ such that $\forall i > i^*$, $\frac{\beta i}{2}  \geq CR^3 ( \frac{1}{1-q_i} + \frac{6q_i}{(1-q_i)^3} )$ and $\frac{\beta i}{2} \geq 3CR \Bigl(\frac{\beta i}{C(1-\beta)} \Bigr)^{\frac{2}{3}} \frac{1}{1-q_i} $. Thus we have, for all $i > i^{*}$,
\begin{equation*}
\label{eqn:CUBIC_finiteness_3}
\begin{split}
\mathbb{E}[\mathcal{V}_1 | \mathcal{V}_0 = i] & \leq (1 - \beta) \biggl(\frac{\beta i}{2}  + \frac{\beta i}{2}\biggr) - \beta i + i \\
& = (1 - \beta^2)i
\end{split}
\end{equation*}
Thus we have shown that \eqref{eqn:twoTweedie-CUBIC} holds for all $i > i^{*}$ for the $\{\mathcal{V}_n\}$ process. Also, since the random variable $\tau(i)$ has finite moments, from \eqref{eqn:CUBIC_finiteness_1} we see that \eqref{eqn:oneTweedie-CUBIC} holds for $A = \{i: i \leq i^*\}$ for the $\{\mathcal{V}_n\}$ process. This shows that under stationarity, $\mathbb{E}[\mathcal{V}(p)] < \infty$ which proves the finiteness of $\mathbb{E}[W(p)]$.
\end{proof}

\section{Asymptotic Approximations}
\label{sec:asymptoticapprox}
We now derive an expression for average window size with random losses. In Section \ref{subsec:asymptoticVk}, we derive approximations for $\{V_k(p)\}$ which will be used in Section \ref{subsec:asymptoticEW} to obtain results for $\{W_n(p)\}$  and throughput. 

\subsection{Asymptotic Approximations for $\{V_k(p)\}$}
\label{subsec:asymptoticVk}
Consider $p^{\frac{1}{4}} G_{\floor{\frac{x}{p^{\frac{3}{4}}}}}^p$, which denotes the product of $p^{\frac{1}{4}}$ with the time for first packet loss when the initial window size is $\floor{\frac{x}{p^{\frac{3}{4}}}}$\footnote{$\floor{\cdot}$ denotes the floor() operation.}. The choice of the parameters $p^{\frac{1}{4}}$ and $p^{\frac{3}{4}}$ is motivated from the deterministic-loss model in Section \ref{sec:fluidmodel} wherein the time between losses is inversely proportional to $p^{\frac{1}{4}}$ and the time average window size is inversely proportional to $p^{\frac{3}{4}}$. In Proposition \ref{prop:Gbarx_CUBIC}, we show that the term  $p^{\frac{1}{4}} G_{\floor[\big]{\frac{x}{p^{\frac{3}{4}}}}}^p$ converges to a random variable $\overline{G}_x$ as $p \rightarrow 0$ for all $x \geq 1$.

\begin{proposition}
\label{prop:Gbarx_CUBIC}
For $x > 0$, as $p \rightarrow 0$, $p^{\frac{1}{4}} G_{\floor{\frac{x}{p^{\frac{3}{4}}}}}^p$ converges in distribution to a random variable $\overline{G}_x$, with
\begin{equation}
\label{eqn:Gbarx_CUBIC2}
\begin{split}
\mathbb{P}(\overline{G}_x \geq y) = \exp \bigl(- xy -  \frac{CR^3y^4}{4} + \sqrt[3]{\frac{\beta x C^2}{(1 - \beta)}} y^3 R^2  - \bigl(\frac{\beta x C^{1/2}}{1 - \beta}\bigr)^{\frac{2}{3}} \frac{3Ry^2}{2} \bigr).
\end{split}
\end{equation}
Also for any finite $M$, if $x, y \leq M$ the above convergence is uniform in $x$ and $y$, i.e.,
\begin{equation}
\label{eqn:Gbarx_CUBIC3}
\lim_{p \rightarrow 0} \sup_{p^{\frac{3}{4}} \leq x \leq M, y \leq M} \Bigl|\mathbb{P}\Bigl(p^{\frac{1}{4}}G_{\floor{\frac{x}{p^{\frac{3}{4}}}}}^p \geq y\Bigr) - \mathbb{P}(\overline{G}_x \geq y) \Bigr| = 0.
\end{equation}
\end{proposition}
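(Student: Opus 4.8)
The plan is to exploit the fact that, conditioned on the window restarting at a just-after-loss value $i$ and then evolving by the loss-free CUBIC law \eqref{eqn:tcp_CUBIC}, the time to the next packet drop has an \emph{explicit} survival function. If $W_j(i)$ denotes the deterministic window carried during the $j$-th RTT after the last loss, i.e. $W_j(i) = C\big((j-1)R - K_i\big)^3 + \tfrac{i}{1-\beta}$ with $K_i = \sqrt[3]{\tfrac{\beta i}{C(1-\beta)}}$ (this is just \eqref{eqn:tcp_CUBIC} with equilibrium $W_0 = i/(1-\beta)$), then independence of the per-packet drops gives
\[
\mathbb{P}(G_i^p \geq m) = \prod_{j=1}^{m-1}(1-p)^{W_j(i)} = (1-p)^{\sum_{j=1}^{m-1} W_j(i)}.
\]
So the probabilistic content is light and the task is analytic. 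First I would set $i = \floor{x p^{-3/4}}$ and $m = \ceil{y p^{-1/4}}$, take logarithms, and study the exponent $\log(1-p)\sum_{j=1}^{m-1}W_j(i)$ as $p\to 0$.

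The core computation is to show that $p\sum_{j=1}^{m-1}W_j(i)$ converges to a Riemann integral. Since $W_j(i)$ is monotone in $j$ (the cubic has nonnegative derivative), the sum is well approximated by $\tfrac1R\int_0^{(m-1)R}W(t)\,dt$ with $W(t)=C(t-K_i)^3+\tfrac{i}{1-\beta}$. Under the chosen scaling both the upper limit and $K_i$ are of order $p^{-1/4}$ while $W$ is of order $p^{-3/4}$, so after multiplying by $p$ and substituting $i\sim x p^{-3/4}$, $m\sim y p^{-1/4}$, the integral tends to $\tfrac{xy}{1-\beta} + \tfrac{C}{4R}\big[(Ry-\kappa_x)^4 - \kappa_x^4\big]$, where $\kappa_x = \big(\tfrac{\beta x}{C(1-\beta)}\big)^{1/3}$. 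Expanding the quartic and using $C\kappa_x^3 = \tfrac{\beta x}{1-\beta}$ collapses the constant and $\kappa_x^3$ pieces via $\tfrac{xy}{1-\beta} - Cy\kappa_x^3 = xy$, and the surviving terms are exactly the exponent in \eqref{eqn:Gbarx_CUBIC2} after rewriting $\kappa_x C R^2y^3 = \big(\tfrac{\beta x C^2}{1-\beta}\big)^{1/3}R^2y^3$ and $\tfrac32 C R y^2\kappa_x^2 = \big(\tfrac{\beta x C^{1/2}}{1-\beta}\big)^{2/3}\tfrac{3Ry^2}{2}$. Together with $\log(1-p) = -p + O(p^2)$ this yields $\mathbb{P}\big(p^{1/4}G^p_{\floor{x/p^{3/4}}}\geq y\big) \to \mathbb{P}(\overline G_x\geq y)$ pointwise, and since the limiting survival function is continuous in $y$ this is convergence in distribution.

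The main obstacle is controlling the approximation errors and showing they vanish \emph{uniformly} for $x,y\le M$, which is delicate because the exponent is a product of a factor $\sim p$ with a sum $\sim p^{-1}$ and is therefore sensitive to lower-order slack. Three error sources must each be shown to be $o(1)$ uniformly on the compact set: (i) the Riemann-sum error, which for the monotone $W$ is bounded by $p\big(W((m-1)R)-W(0)\big) = O(p^{1/4})$; (ii) the replacement of $\log(1-p)$ by $-p$, contributing $O(p)\cdot p\sum_j W_j(i) = O(p)$; and (iii) the floor/ceiling operations, which perturb $i$ and $m$ by $O(1)$ and hence change $\kappa_x$ and the upper integration limit by relative amounts $O(p^{3/4})$ and $O(p^{1/4})$. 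Because each bound depends only on $M$, the exponents converge uniformly to a continuous bounded limit on $\{p^{3/4}\le x\le M,\ y\le M\}$; composing with the (locally uniformly continuous) map $z\mapsto e^{-z}$ then delivers \eqref{eqn:Gbarx_CUBIC3}. As an alternative to estimating these errors by hand, the monotonicity of the survival functions together with continuity of the limit admits a P\'olya/Dini-type argument, upgrading the pointwise convergence of the previous paragraph directly to the uniform statement.
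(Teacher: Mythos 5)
Your proposal is correct and takes essentially the same route as the paper's proof: both start from the exact survival product $(1-p)^{\sum_j W_j(i)}$, take logarithms under the $(p^{-3/4},p^{-1/4})$ scaling, retain exactly the degree-four monomials in $(K_i,m)$ (whose collapse via $C\kappa_x^3=\tfrac{\beta x}{1-\beta}$ you verify correctly), and get \eqref{eqn:Gbarx_CUBIC3} from error terms of size $O(p^{\epsilon})$, $\epsilon>0$, depending only on $M$, composed with the uniformly continuous exponential --- the only cosmetic difference being that the paper evaluates the discrete power sums $\sum i^j$ exactly while you compare the sum with the integral of the monotone cubic. One caution: your P\'olya/Dini fallback would at best give uniformity in $y$ for each fixed $x$, not the joint uniformity over $\{p^{3/4}\le x\le M,\ y\le M\}$, because the survival functions are not monotone in $x$ (window trajectories started from different just-after-loss values cross), so the direct error estimates in your third paragraph are the ones that actually carry the uniform statement.
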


\begin{proof}
We have
\begin{equation*}
\mathbb{P}\Bigl(G_{\floor*{\frac{x}{p^{\frac{3}{4}}}}}^p > \floor{ \frac{y}{p^{\frac{1}{4}}}}\Bigr) \leq
\mathbb{P}\Bigl(p^{\frac{1}{4}}G_{\floor{\frac{x}{p^{\frac{3}{4}}}}}^p \geq y\Bigr) \leq \mathbb{P}\Bigl(G_{\floor{\frac{x}{p^{\frac{3}{4}}}}}^p \geq \floor{ \frac{y}{p^{\frac{1}{4}}}}\Bigr), 
\end{equation*}
with $\mathbb{P}\bigl(G_{\floor{\frac{x}{p^{\frac{3}{4}}}}}^p > \floor{ \frac{y}{p^{\frac{1}{4}}}}\bigr) = (1-p)^{x_0 + x_1 + \cdots + x_{\floor{\frac{y}{p^{\frac{3}{4}}}}}}$, where $x_0 = \floor{\frac{x}{p^{\frac{3}{4}}}}$ is the initial window size (immediately after a loss) and $x_i$ is the window size at the end of the $i^{th}$ RTT. Using \eqref{eqn:tcp_CUBIC}, we get $x_i =  \floor{C(iR - K)^3 + \frac{x_0}{(1 -\beta)}}$ with $K = \sqrt[3]{\frac{\beta x_0}{(1-\beta)C}}$. Let $m = \floor{\frac{y}{p^{\frac{1}{4}}}}$. The term $C(iR - K)^3 + \frac{x_0}{(1 -\beta)} \in [x_i, x_{i}+1]$. Therefore,
\begin{equation}
\label{eqn:Gbarx_CUBIC4}
\begin{split}
\mathbb{P}\bigl( & G_{\floor{\frac{x}{p^{\frac{3}{4}}}}}^p > \floor{ \frac{y}{p^{\frac{1}{4}}}}\bigr) 
 \in \bigl[  (1-p)^{ x_0 + C(R - K)^3 + \frac{x_0}{1 - \beta} + \cdots + C(mR -K)^3 + \frac{x_0}{1 - \beta} + (m+1)},\\
&  (1-p)^{ x_0 + (C(R - K)^3 + \frac{x_0}{1 - \beta}) + \cdots + (C(mR -K)^3 + \frac{x_0}{1 - \beta})   - (m+1)} \bigr].
\end{split}
\end{equation}
The terms on the RHS (right hand side) of equation \eqref{eqn:Gbarx_CUBIC4} can be simplified as follows,
\begin{equation*}
\begin{split}
(R - & K)^3  + (2R - K)^3 +  \cdots + (mR - K)^3 \\
= & R^3 \sum_{i=1}^{m} i^3 - 3 R^2 K \sum_{i=1}^{m} i^2 + 3RK^2 \sum_{i=1}^{m} i - mK^3.
\end{split}
\end{equation*}
Therefore,
\begin{equation}
\label{eqn:Gbarx_CUBIC5}
\begin{split}
x_0 &   + (C(R - K)^3 + \frac{x_0}{1 - \beta}) + \cdots + (C(mR -K)^3 + \frac{x_0}{1 - \beta})    \\
& =    x_0 + \frac{mx_0}{1- \beta}  + C( R^3 \sum_{i=1}^{m} i^3 - 3 R^2 K \sum_{i=1}^{m} i^2 + 3RK^2 \sum_{i=1}^{m} i - mK^3 ).
\end{split}
\end{equation}
After we expand the series, $\sum_{i = 1}^{m} i$, $\sum_{i = 1}^{m} i^2$ and  $\sum_{i = 1}^{m} i^3$, we see that the RHS of equation \eqref{eqn:Gbarx_CUBIC5} has terms of the form $K^n m^j $ with $n+j \leq 4$. Now,
\begin{equation}
\label{eqn:Gbarx_CUBIC6}
\begin{split}
\lim_{p \rightarrow 0} K^n m^j = & \lim_{p \rightarrow 0}  \Biggl( \sqrt[3]{\frac{\beta x_0}{(1 - \beta)C}} \Biggr)^n \Bigl(\frac{y}{p^{\frac{1}{4}}} \Bigr)^j \\
= & \lim_{p \rightarrow 0} \Bigl( \frac{\beta x }{ (1 - \beta) C} \Bigr)^{\frac{n}{3}} p^{ - \frac{n + j}{4}} y^j, 
\end{split}
\end{equation}
and 
\begin{equation*}
\lim_{p \rightarrow 0}  (x_0 + \frac{mx_0}{1- \beta}) = \lim_{p \rightarrow 0}  \frac{x}{p^{\frac{3}{4}}} + \frac{xy}{p(1 - \beta)}. 
\end{equation*}
Therefore, for $n+ j < 4$, $\lim_{p \rightarrow 0} (1-p)^{K^n m^j} = 1$. Also for the term $(1-p)^{m + 1}$ in equation \eqref{eqn:Gbarx_CUBIC4}, we note that $ \lim_{p \rightarrow 0}  (1-p)^{m + 1} = 1$. Therefore for the limit of equation \eqref{eqn:Gbarx_CUBIC4} as $p \rightarrow 0$, we need to only consider terms of the form $K^n m^j $ with $n+j = 4$ and the term $\frac{xy}{1 - \beta}$. For these terms, we use $\lim_{p \rightarrow 0}(1-p)^{\frac{1}{p}} = \exp(-1)$ to get,
\begin{equation}
\label{eqn:Gbarx_CUBIC7}
\mathbb{P}\bigl(G_{\floor{\frac{x}{p^{\frac{3}{4}}}}}^p > \floor{ \frac{y}{p^{\frac{1}{4}}}}\bigr) \rightarrow \exp \bigl(-  xy -  \frac{CR^3y^4}{4} + \sqrt[3]{\frac{\beta x C^2}{(1 - \beta)}} y^3 R^2  - \bigl(\frac{\beta x C^{1/2}}{1 - \beta}\bigr)^{\frac{2}{3}} \frac{3Ry^2}{2} \bigr).
\end{equation}
Using similar steps as above, we can show that
\begin{equation*}
\label{eqn:Gbarx_CUBIC7A}
\mathbb{P}\bigl(G_{\floor{\frac{x}{p^{\frac{3}{4}}}}}^p \geq \floor{ \frac{y}{p^{\frac{1}{4}}}}\bigr) \rightarrow \exp \bigl(-  xy -  \frac{CR^3y^4}{4} + \sqrt[3]{\frac{\beta x C^2}{(1 - \beta)}} y^3 R^2  - \bigl(\frac{\beta x C^{1/2}}{1 - \beta}\bigr)^{\frac{2}{3}} \frac{3Ry^2}{2} \bigr).
\end{equation*}
This proves convergence of $p^{\frac{1}{4}} G_{\floor{\frac{x}{p^{\frac{3}{4}}}}}^p$ in distribution to $\overline{G}_x$.

We now show uniform convergence of $ \mathbb{P}\Bigl(  p^{\frac{1}{4}}  G_{\floor{\frac{x}{p^{\frac{3}{4}}}}}^p \geq  y\Bigr) $ to $\mathbb{P}(\overline{G}_x \geq y) $. We assume that $x,y$ are bounded by $M$. Taking logarithms on both sides of \eqref{eqn:Gbarx_CUBIC4}, we get
\begin{equation}
\label{eqn:Gbarx_CUBIC8}
\begin{split}
\log \mathbb{P}\Bigl(& G_{\floor{\frac{x}{p^{\frac{3}{4}}}}}^p > \floor{ \frac{y}{p^{\frac{1}{4}}}}\Bigr)  \\
 \in & \Bigl[  \Bigl( x_0 + (C(R - K)^3 + \frac{x_0}{1 - \beta}) + \cdots + (C(mR -K)^3 + \frac{x_0}{1 - \beta})\\ 
& \hspace{1mm} + (m+1) \Bigr) \log (1-p),  \Bigl(x_0 + (C(R - K)^3 + \frac{x_0}{1 - \beta}) + \cdots + \\ 
& \hspace{9mm} (C(mR -K)^3 + \frac{x_0}{1 - \beta}) - (m+1)   \Bigr) \log (1-p) \Bigr].
\end{split}
\end{equation}
The equation \eqref{eqn:Gbarx_CUBIC8} has elements of the form $K^n m^j \log (1-p) $ with $n+j \leq 4$, the term $(x_0 + \frac{x_0m}{1 - \beta}) \log (1-p)$ and  $(m+1)\log (1-p)$. The elements with $n+j = 4$ are the only terms that contribute to the limit, i.e., \eqref{eqn:Gbarx_CUBIC7}.  From \eqref{eqn:Gbarx_CUBIC6}, the remaining elements in \eqref{eqn:Gbarx_CUBIC8} are of the form $c(n,j) x^{\frac{n}{3}} y^{j} p^{ - \frac{n+j}{4}} \log (1-p)$ with $n+j < 4$ and $ 0 \leq n \leq 4$ with $c(n,j)$ being some finite coefficient. If $ n+j < 4$, $p^{- \frac{n+j}{4}}$ is of the form $p^{\epsilon(n,j)}$ with $\epsilon(n,j) > -1$. These terms can be grouped together as $f(x,y,p)$, where $f$ has elements of the form $x^{\frac{n}{3}} y^{j} p^{- \frac{n+j}{4}} \log (1-p)$ with $n+j < 4$ and $ 0 \leq n \leq 4$, the element $p^{\frac{-3}{4}}x \log (1-p)$ and the element $(1 + p^{\frac{-3}{4}}y) \log (1-p)$.  Let $T = \max \{1, M\}$, hence $x^{\frac{n}{3}} y^j \leq T^{ \frac{n}{3} + j}$, for $x,y \leq M$. Therefore we have,
\begin{equation*}
\label{eqn:Gbarx_CUBIC9}
\begin{split}
\Bigl| \log \Bigl( \mathbb{P}\Bigl(  G_{\floor{\frac{x}{p^{\frac{3}{4}}}}}^p > \floor{ \frac{y}{p^{\frac{1}{4}}}}   \Bigr) - \log \mathbb{P}(\overline{G}_x \geq y) \Bigr|  & \leq |f(x,y, p)| \\ 
& \leq c_ 1 T^{4} p^\epsilon \log (1-p), \\
\end{split}
\end{equation*}
where the term $T^{4}$ in the inequality comes from the element with the largest power for $x$ and $y$ in the RHS of \eqref{eqn:Gbarx_CUBIC8} which is of the form $c y^4 x^0 $, $\epsilon$ $= \displaystyle{\min_{n,j: n + j < 4}}$ $\epsilon(n,j,k)$ $> -1$ and $c_1$ is a constant independent of $p, x, y$. Therefore, we have
\begin{equation*}
\label{eqn:Gbarx_CUBIC10}
\lim_{p \rightarrow 0} \sup_{p^{\frac{3}{4}} \leq x \leq M, y \leq M} \Bigl| \log \mathbb{P}\bigl(  G_{\floor{\frac{x}{p^{\frac{3}{4}}}}}^p > \floor{ \frac{y}{p^{\frac{1}{4}}}}  \bigr) - \log \mathbb{P}(\overline{G}_x \geq y) \Bigr|  = 0.
\end{equation*}
We can similarly prove
\begin{equation*}
\label{eqn:Gbarx_CUBIC10A}
\lim_{p \rightarrow 0} \sup_{p^{\frac{3}{4}} \leq x \leq M, y \leq M} \Bigl| \log \mathbb{P}\bigl(  G_{\floor{\frac{x}{p^{\frac{3}{4}}}}}^p \geq \floor{ \frac{y}{p^{\frac{1}{4}}}}  \bigr) - \log \mathbb{P}(\overline{G}_x \geq y) \Bigr|  = 0.
\end{equation*}
The result \eqref{eqn:Gbarx_CUBIC3} in Proposition \ref{prop:Gbarx_CUBIC} follows from the uniform continuity of the $\exp()$ function on $(-\infty, 0).$
\end{proof}

We now derive a limiting result for the $\{V_k(p)\}$ process embedded at the loss epochs of the TCP CUBIC window evolution process. The process $\{V_k(p)\}$ is a Markov chain embedded within the window size process $\{W_n(p)\}$. Let $K(x) = \sqrt[3]{\frac{\beta x}{(1 - \beta) C}}$. If $V_0(p) = \floor{\frac{x}{p^{\frac{3}{4}}}}$, then $V_1(p)$ is 
\begin{equation}
\label{eqn:CUBICVn_of_p}
V_1(p) = (1 - \beta) \Bigl( C(G_{\floor{\frac{x}{p^{\frac{3}{4}}}}}^p R - K(\floor{\frac{x}{p^{\frac{3}{4}}}}))^3 + \frac{1}{1-\beta} \floor{\frac{x}{p^{\frac{3}{4}}}} \Bigr),
\end{equation}
where $G_x^p$ denotes time (in multiples of $R$) between consecutive losses where the window size immediately after the first of these losses was $x$.

We now define a Markov chain which serves as the limit for the process $\{V_n(p)\}$ with appropriate scaling. Define a Markov chain $\{\overline{V}_n\}$ as follows. Let $\overline{V}_0$ be a random variable with an arbitrary initial distribution on $\mathbb{R}^+$. Define $\overline{V}_n$  for $n \geq 1$ as
\begin{equation}
\label{eqn:CUBIC_lim_Vn_def}
\overline{V}_n = (1 - \beta) \Bigl( C(\overline{G}_{\overline{V}_{n-1}} R - K(\overline{V}_{n-1}))^3 + \frac{\overline{V}_{n-1}}{1-\beta}\Bigr),
\end{equation}
where $\{\overline{G}_{\overline{V}_{n-1}}\}$ are random variables with distribution given by \eqref{eqn:Gbarx_CUBIC2} chosen independently of $\{\overline{V}_k: k < n-1\}$. The following proposition shows that the process $\{ \overline{V}_n \}$ defined by \eqref{eqn:CUBIC_lim_Vn_def} are the appropriate limiting distribution for the $\{ V_n(p) \}$ process as $p \rightarrow 0$. 
\begin{proposition}
\label{prop:Vbarx_CUBIC}
Suppose $\overline{V}_0 = x$ and $V_0(p) = \floor{\frac{x}{p^{\frac{3}{4}}}}$ for some $x > 0$ for all $p>0$. Then we have 
\begin{equation}
\label{eqn:Vbarx_CUBIC2}
\begin{split}
\lim_{p \rightarrow 0} \sup_{x > p^{\frac{3}{4}}}  \Bigl| & \mathbb{P}_x(p^{\frac{3}{4}}V_1(p) \leq a_1, \hspace*{1mm} p^{\frac{3}{4}}V_2(p) \leq a_2, \cdots,  \hspace*{1mm} p^{\frac{3}{4}}V_n(p) \leq a_n  ) \\
& - \mathbb{P}_x(\overline{V}_1 \leq a_1, \overline{V}_2 \leq a_2 \cdots \overline{V}_n \leq a_n)  \Bigr| = 0,
\end{split}
\end{equation}
where $ a_i \in \mathbb{R}^+$, for $i = 1, 2, \cdots, n$ and $\mathbb{P}_x$ denotes the law of the processes when $\overline{V}_0 = x$ and $V_0(p) = \floor{\frac{x}{p^{\frac{3}{4}}}}$.
\end{proposition}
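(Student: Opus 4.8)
The plan is to reduce the supremum over all $x>p^{3/4}$ to a supremum over a bounded interval, then establish the single-step ($n=1$) convergence uniformly on that interval, and finally propagate it to $n$ steps by a coupling/induction argument. The starting observation is that the scaled one-step map is exact: writing $\overline{G}:=p^{1/4}G^p_{\floor{x/p^{3/4}}}$ and $\hat{x}:=p^{3/4}\floor{x/p^{3/4}}$, and using $p^{1/4}K(\floor{x/p^{3/4}})=K(\hat{x})$, equation \eqref{eqn:CUBICVn_of_p} gives $p^{3/4}V_1(p)=(1-\beta)C(\overline{G}R-K(\hat{x}))^3+\hat{x}$, which is \eqref{eqn:CUBIC_lim_Vn_def} with $x$ replaced by $\hat{x}$ and $\overline{G}_x$ by $\overline{G}$. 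Since $\overline{G}\ge 0$ and the cube is increasing, the minimum over $\overline{G}$ is attained at $\overline{G}=0$, giving $p^{3/4}V_1(p)\ge(1-\beta)\hat{x}$ and, identically, $\overline{V}_1\ge(1-\beta)x$. Hence whenever $\hat{x}>a_1/(1-\beta)$ both joint probabilities in \eqref{eqn:Vbarx_CUBIC2} vanish; applying the same bound at each later coordinate confines every state $\overline{V}_k$ and $p^{3/4}V_k(p)$ to $[0,M]$ with $M:=\max\{a_1/(1-\beta),a_1,\dots,a_n\}$ on the events of interest. This reduces the claim to proving convergence uniformly over $p^{3/4}<x\le M$, with all relevant intermediate states living in $[0,M]$.

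Next I would prove the single-step statement uniformly on $[0,M]$. Because the one-step map is increasing in $\overline{G}$, the event $\{p^{3/4}V_1(p)\le a\}$ equals $\{\overline{G}\le\psi(\hat{x},a)\}$ for the explicit continuous threshold $\psi(z,a)=R^{-1}\bigl(K(z)+((a-z)/((1-\beta)C))^{1/3}\bigr)$, and likewise $\{\overline{V}_1\le a\}=\{\overline{G}_x\le\psi(x,a)\}$. The single-step error is then bounded by the sum of $|\mathbb{P}(\overline{G}\le\psi(\hat{x},a))-\mathbb{P}(\overline{G}_{\hat{x}}\le\psi(\hat{x},a))|$, controlled by the uniform convergence \eqref{eqn:Gbarx_CUBIC3} of Proposition \ref{prop:Gbarx_CUBIC}, and $|\mathbb{P}(\overline{G}_{\hat{x}}\le\psi(\hat{x},a))-\mathbb{P}(\overline{G}_x\le\psi(x,a))|$, controlled by continuity of $z\mapsto\mathbb{P}(\overline{G}_z\le y)$ and of $\psi$ together with $|\hat{x}-x|\le p^{3/4}\to0$. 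Both are uniform on $[0,M]$, which settles $n=1$.

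For general $n$ I would couple the two chains on $[0,M]$ and argue by induction. Using a common sequence of uniforms and the quantile transforms of the state-dependent laws of $p^{1/4}G^p_{\cdot}$ and $\overline{G}_{\cdot}$, I would generate the two processes so that at each step the discrepancy in the driving variables is bounded by the Proposition \ref{prop:Gbarx_CUBIC} gap plus a term governed by continuity in the state parameter; since the map in \eqref{eqn:CUBIC_lim_Vn_def} is Lipschitz on the bounded region $[0,M]$, an accumulated-error estimate then gives $\max_{k\le n}|p^{3/4}V_k(p)-\overline{V}_k|\to0$ in probability, uniformly over $x\in(p^{3/4},M]$. Finally, since $\overline{G}_z$ has a continuous distribution and the maps are smooth, the joint law of $(\overline{V}_1,\dots,\overline{V}_n)$ has no atoms, so coordinatewise closeness upgrades to convergence of the joint distribution function and yields \eqref{eqn:Vbarx_CUBIC2}.

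The main obstacle I expect is this last step: propagating the one-step estimate through $n$ steps while the laws of the driving variables depend on the current, random, coupled state. The delicate point is to control the two error sources simultaneously and uniformly — the $p$-dependent gap from Proposition \ref{prop:Gbarx_CUBIC}, which is uniform only on bounded state sets, and the state-discrepancy gap, which must be absorbed by continuity of $z\mapsto\overline{G}_z$ and Lipschitzness of the map — and to guarantee the coupling errors do not amplify across steps. The boundedness secured in the first step is precisely what keeps every constant (Lipschitz modulus, modulus of continuity, and the Proposition \ref{prop:Gbarx_CUBIC} rate) uniform, so organizing the induction so that all tracked states provably remain in $[0,M]$ — stopping the comparison the first time a coordinate would exceed $M$, where neither joint constraint receives any contribution — is the crux of the argument.
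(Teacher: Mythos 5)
Your reduction to bounded states and your $n=1$ step coincide with the paper's own proof: it likewise observes that $\mathbb{P}_x(p^{\frac{3}{4}}V_1(p)\leq a_1)=\mathbb{P}_x(\overline{V}_1\leq a_1)=0$ once $x>\frac{a_1}{1-\beta}$, and inverts the monotone one-step map so that the event becomes a threshold event for $p^{\frac{1}{4}}G^p_{\floor{x/p^{3/4}}}$, to which the uniform estimate \eqref{eqn:Gbarx_CUBIC3} of Proposition \ref{prop:Gbarx_CUBIC} applies. For $n\geq 2$ you genuinely diverge. The paper does not couple: it writes the joint probability by Chapman--Kolmogorov as $\int_0^{a_1}\mathbb{P}_y(\cdots)\,\mathbb{P}_x(p^{\frac{3}{4}}V_1(p)\in dy)$, replaces the inner kernel by its limit using the uniform one-step estimate, and then invokes Proposition \ref{prop:unif_conv} from Appendix \ref{app:appendixA} (uniform weak convergence implies uniform convergence of $\mathbb{E}f$ for continuous compactly supported $f$) applied to $g_n(y)=\mathbb{P}_y(\overline{V}_1\leq a_2,\dots,\overline{V}_n\leq a_{n+1})\mathds{1}_{\{y\leq a_1\}}$, inducting on $n$. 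What each route buys: the paper's recursion needs no pathwise construction but leans on continuity of $g_n$ --- and in fact $g_n$ has a jump at $y=a_1$ from the indicator, so strictly speaking the paper's appeal to Proposition \ref{prop:unif_conv} needs a sandwiching of the indicator by continuous functions, a point your coupling bypasses entirely; your route instead gives a quantitative pathwise bound, at the price of managing state-dependent quantile transforms and error accumulation yourself.

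There is one concrete step in your plan that fails as written: the map in \eqref{eqn:CUBIC_lim_Vn_def} is \emph{not} Lipschitz on $[0,M]$, because $K(z)=\sqrt[3]{\frac{\beta z}{(1-\beta)C}}$ has infinite derivative at $z=0$, and states near $0$ are not excluded (the supremum runs down to $x=p^{\frac{3}{4}}$, and the $a_i$ may be small). So the "accumulated-error estimate" with a Lipschitz modulus is unavailable; the repair is to use only uniform continuity (H\"older-$\frac{1}{3}$) on $[0,M]$, which for \emph{fixed} $n$ still forces the coupled discrepancy to vanish as $p\rightarrow 0$, albeit with a modulus that degrades with each composition (roughly an exponent $3^{-n}$). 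Relatedly, your final upgrade needs more than "the joint law has no atoms": to get the stated uniformity in $x$ you need anti-concentration of $\overline{V}_k$ near each threshold $a_k$ \emph{uniformly in the initial state}, i.e.\ $\sup_{x\leq M}\mathbb{P}_x(\overline{V}_k\in[a_k-\delta,a_k+\delta])\rightarrow 0$ as $\delta\rightarrow 0$. This does hold --- the explicit survival function \eqref{eqn:Gbarx_CUBIC2} has $y$-derivative bounded on compacts uniformly in the state parameter, and this propagates through the smooth one-step map --- but it must be stated and proved, since it is exactly what converts your closeness-in-probability bound into the uniform convergence of joint distribution functions claimed in \eqref{eqn:Vbarx_CUBIC2}. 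With those two repairs your argument goes through as a valid alternative to the paper's induction.
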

\begin{proof}
We prove \eqref{eqn:Vbarx_CUBIC2} for $n=1,2$, the proof for $n > 2$ follows by induction. 

For $n=1$, 
\begin{equation*}
\label{eqn:Vbarx_CUBIC3}
\begin{split}
\lim_{p \rightarrow 0} \mathbb{P}_x(p^{\frac{3}{4}}V_1(p) \leq a_1) &= \lim_{p \rightarrow 0} \mathbb{P} \bigl(  p^{\frac{3}{4}} (1 - \beta) C\bigl(R G^p_{\floor{\frac{x}{p^{\frac{3}{4}}}}} - K\bigl(\floor{\frac{x}{p^{\frac{3}{4}}}}\bigr)\bigr)^3 + x \leq a_1 \bigr) \\
&= \lim_{p \rightarrow 0} \mathbb{P}\bigl( p^{\frac{1}{4}} G^p_{\floor{\frac{x}{p^{\frac{3}{4}}}}} \leq \frac{\bigl(\frac{a_1 - x}{C(1-\beta)}\bigr)^{\frac{1}{3}} + K(x)}{R}\bigr) \\
&= \mathbb{P}\bigl(\overline{G}_{x} \leq \frac{\bigl(\frac{a_1 - x}{C(1-\beta)}\bigr)^{\frac{1}{3}} + K(x)}{R}\bigr) \\
&= \mathbb{P}_x(\overline{V}_1 \leq a_1).
\end{split}
\end{equation*}
From equation \eqref{eqn:Gbarx_CUBIC3} in Proposition \ref{prop:Gbarx_CUBIC}, the convergence is uniform in $x$ over any bounded interval. Also, from \eqref{eqn:CUBICVn_of_p} and \eqref{eqn:CUBIC_lim_Vn_def},  for $x > \frac{a_1}{1-\beta}$, we have $\mathbb{P}_x( p^{\frac{3}{4}}V_1(p) \leq a_1) $ $=$ $\mathbb{P}_x( \overline{V}_1 \leq a_1) $ $= 0$. Therefore, 
\begin{equation*}
\label{eqn:Vbarx_CUBIC4}
\lim_{p \rightarrow 0} \sup_{x \geq p^{\frac{3}{4}} } \Bigl| \mathbb{P}_x \Bigl( p^{\frac{3}{4}}V_1(p) \leq a_1   \Bigr) - \mathbb{P}_x \Bigl( \overline{V}_1 \leq a_1  \Bigr) \Bigr| = 0.
\end{equation*}
This proves \eqref{eqn:Vbarx_CUBIC2} for $n = 1$. 

We now prove the result for $n=2$. Consider
\begin{equation*}
\label{eqn:Vbarx_CUBIC5}
\begin{split}
\mathbb{P}_x(& p^{\frac{3}{4}}V_1(p) \leq a_1, p^{\frac{3}{4}}V_2(p) \leq a_2) \\
= \displaystyle\int\limits_{0}^{a_1} & \mathbb{P} \bigl( C(1-\beta)\bigl(R p^{\frac{1}{4}} G^p_{\floor{\frac{y}{p^{\frac{3}{4}}}}} - K(y)\bigr)^3 + y \leq a_2  \bigr) \mathbb{P}_x(p^{\frac{3}{4}}V_1(p)  \in dy).
\end{split}
\end{equation*}
From equation \eqref{eqn:Gbarx_CUBIC3} in Proposition \ref{prop:Gbarx_CUBIC}, the term $\mathbb{P} (C(1-\beta)(R p^{\frac{1}{4}} G^p_{\floor{\frac{y}{p^{\frac{3}{4}}}}} - K(y))^3 + y \leq a_2)$ converges to $\mathbb{P}(C(1-\beta)(R p^{\frac{1}{4}} \overline{G}_{x} - K(y))^3 + y \leq a_2)$ uniformly in $y$ for $y < a_1$. Therefore, for any given $\epsilon >0$ there exists a $p^*$ such that for  $p < p^*$, 
\begin{equation*}
\label{eqn:Vbarx_CUBIC6}
\begin{split}
\Bigl| \mathbb{P}_x(& p^{\frac{3}{4}}V_1(p) \leq a_1, p^{\frac{3}{4}}V_2(p) \leq a_2) \\
& - \displaystyle\int\limits_{0}^{a_1}  \mathbb{P} \bigl( C(1 - \beta)\bigl(R p^{\frac{1}{4}} \overline{G}_{y} - K(y)\bigr)^3 + y \leq a_2  \bigr) \mathbb{P}_x(p^{\frac{3}{4}}V_1(p)  \in dy) \Bigr| \leq \epsilon. 
\end{split}
\end{equation*}
Now, 
\begin{equation*}
\label{eqn:Vbarx_CUBIC6A}
\begin{split}
\displaystyle\int\limits_{0}^{a_1}  \mathbb{P} & \bigl( C(1 - \beta)\bigl(R p^{\frac{1}{4}} \overline{G}_{y} - K(y)\bigr)^3 + y \leq a_2  \bigr) \mathbb{P}_x(p^{\frac{3}{4}}V_1(p)  \in dy) \\
& =  \displaystyle\int\limits_{0}^{\infty}  \mathbb{P} \bigl( C(1-\beta)\bigl(R p^{\frac{1}{4}} \overline{G}_{y} - K(y)\bigr)^3 + y \leq a_2  \bigr) \mathds{1}_{\{y \leq a_1\}} \mathbb{P}_x(p^{\frac{3}{4}}V_1(p)  \in dy) \\
& = \mathbb{E}_x [ g(p^{\frac{3}{4}} V_1(p))],
\end{split}
\end{equation*}
where the function $g(y) = \mathbb{P} \bigl( C(1-\beta)\bigl(R p^{\frac{1}{4}} \overline{G}_{y} - K(y)\bigr)^3 + y \leq a_2  \bigr) \mathds{1}_{\{y \leq a_1\}}$. For any continuous functions $f$ on $\mathbb{R}^+$ with compact support, using Proposition \ref{prop:unif_conv} from Appendix \ref{app:appendixA}, we have 
\begin{equation}
\label{eqn:Vbarx_CUBIC7}
\lim_{p \rightarrow 0} \sup_{x \geq p^{\frac{3}{4}} } \Bigl| E_x \bigl[ f(p^{\frac{3}{4}}V_1(p)) \bigr] - E_x \bigl[ f(\overline{V}_1) \bigr] \Bigr| = 0,
\end{equation}
The function $g$ is continuous with compact support. Therefore using \eqref{eqn:Vbarx_CUBIC7} we get,
\begin{equation*}
\label{eqn:Vbarx_CUBIC8}
\lim_{p \rightarrow 0} \sup_{x \geq p^{\frac{3}{4}} } \Bigl| \mathbb{P}_x \bigl( p^{\frac{3}{4}}V_1(p) \leq a_1, p^{\frac{3}{4}}V_2(p) \leq a_2   \bigr) - \mathbb{P}_x \bigl( \overline{V}_1 \leq a_1, \overline{V}_2 \leq a_2   \bigr) \Bigr| = 0.
\end{equation*}

The proof of \eqref{eqn:Vbarx_CUBIC2} for $n>2$ can be done using induction as follows. 
\begin{equation*} 
\label{eqn:Vbarx_CUBIC8A}
\begin{split}
\mathbb{P}_x(& p^{\frac{3}{4}}V_1(p) \leq a_1, p^{\frac{3}{4}}V_2(p) \leq a_2, \cdots, p^{\frac{3}{4}}V_{n+1}(p) \leq a_{n+1}) \\
= &  \displaystyle\int\limits_{0}^{a_1}  \mathbb{P}_y( p^{\frac{3}{4}}V_1(p) \leq a_2, \cdots, p^{\frac{3}{4}}V_{n}(p) \leq a_{n+1}) \mathbb{P}_x(p^{\frac{1}{2 -k}}V_1(p)  \in dy). 
\end{split}
\end{equation*}
Assuming the result holds for $n$, we have
\begin{equation*}
\label{eqn:Vbarx_CUBIC8B}
\begin{split}
\Big| \mathbb{P}_x(  p^{\frac{3}{4}}V_1(p) & \leq a_1, p^{\frac{3}{4}}V_2(p) \leq a_2, \cdots, p^{\frac{3}{4}}V_{n+1}(p) \leq a_{n+1})  
-  \mathbb{E}_x [ g_n(p^{\frac{3}{4}} V_1(p))] \Big| \\ 
\leq \epsilon,
\end{split}
\end{equation*}
where the function $g_n(y) = \mathbb{P}_y( \overline{V}_1  \leq a_2, \cdots, \overline{V}_n  \leq a_{n+1}) \mathds{1}_{\{y \leq a_1\}}$. The function $g_n(\cdot)$ is continuous by the induction hypothesis. Using Proposition \ref{prop:unif_conv} from Appendix \ref{app:appendixA}, gives us the desired result.
\end{proof}

Since the finite dimensional distributions of $ \{ p^{\frac{3}{4}}V_n(p) \} $ converge to $ \{ \overline{V}_n \}$, we have
\begin{corollary}
\label{coro:Vbarx_CUBIC}
If $\lim_{p \rightarrow 0}$  $p^{\frac{3}{4}}V_0(p)$ converges in distribution to $\overline{V}_0$, then the Markov chain $ \{ p^{\frac{3}{4}}V_n(p) \} $ converges in distribution to the Markov chain $ \{ \overline{V}_n \}$.
\end{corollary}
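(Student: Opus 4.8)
The plan is to reduce the statement to the finite-dimensional convergence already furnished by Proposition \ref{prop:Vbarx_CUBIC} and then average over the random, $p$-dependent initial condition. Convergence in distribution of the two Markov chains means convergence of all their finite-dimensional distributions, so I fix $n$ and $a_1, \dots, a_n \in \mathbb{R}^+$, write $\mu_p$ for the law of $p^{\frac{3}{4}}V_0(p)$ and $\mu_0$ for the law of $\overline{V}_0$. Conditioning on the initial state and using that $V_0(p) \geq 1$, so that $\mu_p$ is supported on $[p^{\frac{3}{4}}, \infty)$, I would express the joint law as
\begin{equation*}
\mathbb{P}(p^{\frac{3}{4}}V_1(p) \leq a_1, \dots, p^{\frac{3}{4}}V_n(p) \leq a_n) = \int F_p(x)\, \mu_p(dx),
\end{equation*}
where $F_p(x) = \mathbb{P}_x(p^{\frac{3}{4}}V_1(p) \leq a_1, \dots, p^{\frac{3}{4}}V_n(p) \leq a_n)$, and write the target as $\int F(x)\, \mu_0(dx)$ with $F(x) = \mathbb{P}_x(\overline{V}_1 \leq a_1, \dots, \overline{V}_n \leq a_n)$.

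The key step is the decomposition
\begin{equation*}
\int F_p\, d\mu_p - \int F\, d\mu_0 = \int (F_p - F)\, d\mu_p + \Bigl( \int F\, d\mu_p - \int F\, d\mu_0 \Bigr).
\end{equation*}
For the first term I would bound $\bigl| \int (F_p - F)\, d\mu_p \bigr| \leq \sup_{x > p^{\frac{3}{4}}} |F_p(x) - F(x)|$, which tends to $0$ by the uniform-in-$x$ convergence \eqref{eqn:Vbarx_CUBIC2} of Proposition \ref{prop:Vbarx_CUBIC}; the uniformity of that proposition over the starting point is precisely what lets me average against the arbitrary, $p$-dependent measure $\mu_p$. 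For the second term I would invoke the hypothesis $\mu_p \Rightarrow \mu_0$ together with the fact that $F$ is bounded (it is a probability) and continuous in $x$: continuity of $F$ follows because the one-step kernel of $\{\overline{V}_n\}$ defined in \eqref{eqn:CUBIC_lim_Vn_def} is Feller, which in turn rests on the explicit law \eqref{eqn:Gbarx_CUBIC2} of $\overline{G}_x$ and the map $x \mapsto K(x)$ depending continuously on $x$.

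The main obstacle is that \emph{both} the integrand $F_p$ and the integrating measure $\mu_p$ depend on $p$, so mere pointwise convergence $F_p \to F$ would not suffice to pass to the limit; the argument hinges on the \emph{uniform} convergence established in Proposition \ref{prop:Vbarx_CUBIC}, which controls $F_p - F$ uniformly over the entire support of $\mu_p$. A secondary point requiring care is verifying the Feller/continuity property of $F$ so that weak convergence of the initial laws can legitimately be applied to the second term; this reduces to the continuity in $x$ of the limiting inter-loss law $\overline{G}_x$ and of $K(\cdot)$, both of which are manifest from their closed forms.
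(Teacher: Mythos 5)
Your proposal is correct and follows essentially the same route as the paper: the identical decomposition into $\int (F_p - F)\, d\mu_p$, controlled by the uniform-in-$x$ convergence of Proposition \ref{prop:Vbarx_CUBIC} over the support $[p^{\frac{3}{4}},\infty)$, plus $\int F\, d\mu_p - \int F\, d\mu_0$, handled by weak convergence of the initial laws together with boundedness and continuity of $x \mapsto \mathbb{P}_x(\overline{V}_1 \leq a_1, \dots, \overline{V}_n \leq a_n)$. The only cosmetic difference is that you state the argument for general $n$ at once while the paper writes out $n=1$ and appeals to induction for $n \geq 2$.
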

\begin{proof}
Let $\hat{\pi}_p$ and $\hat{\pi}$ be the initial distributions for the processes $ \{ p^{\frac{3}{4}}V_n(p) \} $ and $ \{ \overline{V}_n \}$  respectively. Also, let $\hat{\pi}_p$ converge weakly to $\hat{\pi}$. We have
\begin{equation*}
\begin{split}
\Bigl| \mathbb{P}_{\hat{\pi}_p} & (p^{\frac{3}{4}}V_1(p) \leq a_1) - \mathbb{P}_{\hat{\pi}}(\overline{V}_1  \leq a_1) \Bigr| \\
= & \Bigl|  \int_x  \mathbb{P}_x(p^{\frac{3}{4}}V_1(p) \leq a_1) \hat{\pi}_p(dx) -  \mathbb{P}_x(\overline{V}_1 \leq a_1) \hat{\pi}(dx) \Bigl| \\
\leq & \int_x  \Bigl| \mathbb{P}_x(p^{\frac{3}{4}}V_1(p) \leq a_1) - \mathbb{P}_x(\overline{V}_1 \leq a_1) \Bigr| \hat{\pi}_p(dx)\\
&+ \Bigl| \int_x  \mathbb{P}_x(\overline{V}_1 \leq a_1) \hat{\pi}_p(dx) - \int_x  \mathbb{P}_x(\overline{V}_1 \leq a_1) \hat{\pi}(dx) \Bigr|.
\end{split}
\end{equation*}
From Proposition \ref{prop:Vbarx_CUBIC}, for any $\epsilon > 0$, we can choose a $p^*$, such that for all $p < p^*$, $\Bigl| \mathbb{P}_x(p^{\frac{3}{4}}V_1(p) \leq a_1) - \mathbb{P}_x(\overline{V}_1 \leq a_1) \Bigr|$ $\leq \epsilon$, for all $x \geq p^{\frac{3}{4}}$. Also, since $ \mathbb{P}_x(\overline{V}_1 \leq a_1)$ is a continuous, bounded function in $x$, we have
\begin{equation*}
\lim_{p \rightarrow 0} \Bigl| \int_x  \mathbb{P}_x(\overline{V}_1 \leq a_1) \hat{\pi}_p(dx) -  \int_x  \mathbb{P}_x(\overline{V}_1 \leq a_1) \hat{\pi}(dx) \Bigr| = 0.
\end{equation*}
This proves the result in Corollary \ref{coro:Vbarx_CUBIC} for $n = 1$. The proof for $n \geq 2$ follows easily from induction.
\end{proof}

We now prove that the limiting Markov chain $\{ \overline{V}_n \}$ has a unique invariant distribution. For proving that, the given proof requires that (some of) the moments of $\overline{G}_x$ be uniformly bounded in $x$, which follows from the following lemma. 
\begin{lemma}
\label{lemma:MGF_Gbarx_bdd}
There exists $\zeta > 0$ such that for all $t \in (-\zeta, \zeta)$, we have
\begin{equation*}
\label{eqn:MGF_Gbarx_bdd}
\sup_{x>0} \mathbb{E}[e^{t\overline{G}_x}] < \infty.
\end{equation*}
\end{lemma}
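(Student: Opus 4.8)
The plan is to work directly from the explicit complementary CDF of $\overline{G}_x$ given in \eqref{eqn:Gbarx_CUBIC2}. Writing the exponent as a function of $y$, we have
\begin{equation*}
\mathbb{P}(\overline{G}_x \geq y) = \exp\bigl(-h_x(y)\bigr), \quad h_x(y) = xy + \frac{CR^3 y^4}{4} - \sqrt[3]{\tfrac{\beta x C^2}{1-\beta}}\, R^2 y^3 + \Bigl(\tfrac{\beta x C^{1/2}}{1-\beta}\Bigr)^{2/3} \tfrac{3R}{2} y^2.
\end{equation*}
The key observation is that the leading term in $y$ is $\frac{CR^3 y^4}{4}$, whose coefficient does not depend on $x$, so the tail of $\overline{G}_x$ decays like $\exp(-cy^4)$ uniformly in $x$ once $y$ is large. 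The middle terms in $h_x$ carry $x$-dependence through the factors $x^{1/3}$ and $x^{2/3}$, but these multiply lower powers of $y$ than the quartic; the main obstacle is that the cubic term $-\sqrt[3]{\cdot}\,R^2 y^3$ enters with a negative sign and grows with $x$, so I must check that it cannot overwhelm the quartic uniformly in $x$.

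First I would establish a uniform lower bound $h_x(y) \geq \tfrac{CR^3}{8} y^4 - b$ for some constant $b$ independent of $x$, valid for all $y \geq 0$ and all $x > 0$. To do this I would apply Young's inequality (or simply AM--GM) to dominate the negative cubic term: since $\sqrt[3]{\tfrac{\beta x C^2}{1-\beta}}\,R^2 y^3$ is a product of a power of $x$ and $y^3$, and the quartic supplies $y^4$, one can absorb $\epsilon y^4$ against $x^{1/3} y^3$ at the cost of an additive term in $x$ — but that additive term itself must be controlled by the positive quadratic contribution $x^{2/3} y^2$ and the linear $xy$. The cleanest route is to treat $h_x(y)$ as a quartic in $y$ with the four coefficients depending on $x$ and show that, after completing appropriate powers, the sum of the three $x$-dependent terms $xy - c_1 x^{1/3} y^3 + c_2 x^{2/3} y^2$ is bounded below by a constant uniformly in $x$ and $y$; the remaining pure quartic $\tfrac{CR^3}{4} y^4$ then dominates. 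Concretely, I would substitute $u = x^{1/3} y$ to homogenize: the three $x$-dependent terms become $u^3 - c_1 u^3\cdot(\text{factor}) + \cdots$, revealing that their combination is a fixed polynomial in $u$ bounded below, independent of $x$.

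Once the uniform bound $h_x(y) \geq \tfrac{CR^3}{8} y^4 - b$ is in hand, the conclusion is immediate. Choosing $\zeta$ with $0 < \zeta$ arbitrary (indeed any finite $\zeta$ works here because of the super-exponential $y^4$ decay), I would write, for $t \in (-\zeta,\zeta)$,
\begin{equation*}
\mathbb{E}[e^{t\overline{G}_x}] = \int_0^\infty t\, e^{ty}\,\mathbb{P}(\overline{G}_x \geq y)\, dy + 1 \leq 1 + |t| \int_0^\infty e^{|t| y}\, e^{-\frac{CR^3}{8} y^4 + b}\, dy,
\end{equation*}
using the standard identity $\mathbb{E}[e^{t\overline{G}_x}] = 1 + t\int_0^\infty e^{ty}\mathbb{P}(\overline{G}_x \geq y)\,dy$ for a nonnegative random variable. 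The final integral $\int_0^\infty e^{|t|y - \frac{CR^3}{8} y^4 + b}\,dy$ is finite because the quartic in the exponent dominates the linear term, and crucially its value is a bound independent of $x$. Taking the supremum over $x > 0$ therefore keeps the right-hand side finite, which is exactly the claim. I expect the only genuine work to be the homogenization argument verifying that the negative cubic term cannot break the uniform lower bound on $h_x$; everything after that is a routine tail-integral estimate.
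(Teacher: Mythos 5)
There is a genuine gap at the heart of your argument: the claim that the three $x$-dependent terms $xy - c_1 x^{1/3}y^3 + c_2 x^{2/3}y^2$ are bounded below by a constant uniformly in $x$ and $y$ is false. Your substitution $u = x^{1/3}y$ does not homogenize them (it yields $u^3/y^2 - c_1 u y^2 + c_2 u^2$, which still mixes $u$ and $y$); the correct scaling is $x = s^3 R^3 y^3$, under which the full exponent of \eqref{eqn:Gbarx_CUBIC2} evaluated along that curve becomes
\begin{equation*}
h_{s^3R^3y^3}(y) = R^3 y^4 \Bigl( s^3 + \tfrac{3}{2}\, a^2 C^{1/3} s^2 - a C^{2/3} s + \tfrac{C}{4} \Bigr), \qquad a = \Bigl(\tfrac{\beta}{1-\beta}\Bigr)^{1/3},
\end{equation*}
so the $x$-dependent part contributes $R^3 y^4\bigl(s^3 + \tfrac{3}{2}a^2C^{1/3}s^2 - aC^{2/3}s\bigr)$, which is negative for small $s>0$ and hence tends to $-\infty$ like a negative multiple of $y^4$ along $x \propto y^3$ --- it is not bounded below by any constant. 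The negative cubic genuinely consumes a definite fraction of the quartic: for $C=0.4$, $\beta=0.3$, the minimum over $s>0$ of the bracketed cubic is $\gamma \approx 0.0510$ while $C/4 = 0.1$, i.e.\ roughly half the quartic coefficient is eaten at the optimal $x$. Your proposed bound $h_x(y) \geq \tfrac{CR^3}{8}y^4 - b$ survives for these parameters only by the hair's breadth $0.0510 > C/8 = 0.05$, and no soft absorption argument (Young/AM--GM with parameter-independent constants) can establish it, because whether any positive quartic remains after the infimum over $x$ is a quantitative question about $(C,\beta)$.

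This is precisely why the paper's proof performs the minimization over $x$ exactly rather than by absorption: it sets $H(y) = \sup_{x>0}\mathbb{P}(\overline{G}_x \geq y)$, substitutes $x = t^3$, finds the minimizer $x^*(y) \propto R^3y^3$ in closed form \eqref{eqn:MGF_Gbarx_bdd3}, obtains $H(y) = e^{-\gamma(C,\beta)R^3y^4}$, and then verifies \emph{numerically} that $\gamma(C,\beta) > 0$ for the parameters of interest ($\gamma = 0.0510$ for $C = 0.4$, $\beta = 0.3$, and positive for all $\beta \in (0,1)$ at $C = 0.4$); even the paper does not obtain positivity of $\gamma$ by a structural inequality. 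Your final step --- deducing $\sup_{x>0}\mathbb{E}[e^{t\overline{G}_x}] < \infty$ from a uniform $e^{-cy^4}$ tail via $\mathbb{E}[e^{t\overline{G}_x}] = 1 + t\int_0^\infty e^{ty}\,\mathbb{P}(\overline{G}_x \geq y)\,dy$ --- is sound and matches the paper's conclusion (and correctly shows any finite $\zeta$ works). To repair the proposal, replace the claimed uniform lower bound on the $x$-dependent terms by the exact infimum over $x$ (equivalently, the scaling $x = s^3R^3y^3$ followed by verification that the resulting cubic in $s$ has a strictly positive minimum for the given $C$ and $\beta$), which is exactly the computation the paper carries out.
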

\begin{proof}
Consider $H(y) = \sup_{x>0} \mathbb{P}(\overline{G}_x \geq y)$. The function $H(y)$ upper bounds $\mathbb{P}(\overline{G}_x \geq y)$ for all $x > 0$. From \eqref{eqn:Gbarx_CUBIC2}, 
\begin{equation}
\label{eqn:MGF_Gbarx_bdd2}
\begin{split}
H(y) = \exp \bigl(-  \inf_{x>0} \bigl(  xy +  \frac{CR^3y^4}{4} - \sqrt[3]{\frac{\beta x C^2}{(1 - \beta)}} y^3 R^2  + \bigl(\frac{\beta x C^{1/2}}{1 - \beta}\bigr)^{\frac{2}{3}} \frac{3Ry^2}{2}\bigr) \bigr).
\end{split}
\end{equation}
Let 
\begin{equation*}
f(x,y) = xy +  \frac{CR^3y^4}{4} - \sqrt[3]{\frac{\beta x C^2}{(1 - \beta)}} y^3 R^2  + \bigl(\frac{\beta x C^{1/2}}{1 - \beta}\bigr)^{\frac{2}{3}} \frac{3Ry^2}{2}.
\end{equation*}
Substituting $x=t^3$ we get,
\begin{equation*}
f(t,y) = t^3y +  \frac{CR^3y^4}{4} - \sqrt[3]{\frac{\beta C^2}{(1 - \beta)}}  y^3 R^2 t + \bigl(\frac{\beta C^{1/2}}{1 - \beta}\bigr)^{\frac{2}{3}} \frac{3Ry^2}{2} t^2.
\end{equation*}
Now,
\begin{equation}
\label{eqn:first_der_f}
\frac{\partial f(t,y)}{\partial t} = 3 t^2y + 3\bigl(\frac{\beta C^{1/2}}{1 - \beta}\bigr)^{\frac{2}{3}} Ry^2 t - \sqrt[3]{\frac{\beta C^2}{(1 - \beta)}}  y^3 R^2,
\end{equation}
and
\begin{equation}
\label{eqn:second_der_f}
\frac{\partial^2 f(t,y)}{\partial t^2} = 6ty + 3\bigl(\frac{\beta C^{1/2}}{1 - \beta}\bigr)^{\frac{2}{3}} Ry^2.
\end{equation}
Consider $f(t,y)$ at some fixed $y>0$. From \eqref{eqn:first_der_f} and \eqref{eqn:second_der_f}, the function $f(t,y)$ has two stationary points, one of which is a local minimum and the other a local maximum. Let us denote the local minimum by $t_{min}(y)$ and the local maximum by $t_{max}(y)$. We have $t_{min}(y) > 0$ and $t_{max}(y) < 0$. Also as $t \rightarrow -\infty$, $f(t,y) \rightarrow -\infty$ and as $t \rightarrow \infty$, $f(t,y) \rightarrow \infty$. Thus over $t > 0$, the function $f(t,y)$ has a unique global minimum. Hence, there exists a unique $x > 0$ (corresponding to the local minimum of $f(t,y)$, $t_{min}(y)$) which attains the infimum in equation \eqref{eqn:MGF_Gbarx_bdd2}. Let $x^*(y)$ denote the $x$ which attains the infimum in equation \eqref{eqn:MGF_Gbarx_bdd2}. The minimum $x^*(y)$ is given by
\begin{equation}
\label{eqn:MGF_Gbarx_bdd3}
x^*(y) = \frac{R^3y^3}{8} \Bigl[ - \Bigl(\frac{\beta C^{0.5}}{1-\beta}\Bigr)^{\frac{2}{3}} + \sqrt{\Bigl(\frac{\beta C^{0.5}}{1-\beta}\Bigr)^{\frac{4}{3}} + \frac{4}{3} \Bigl(\frac{\beta C^2}{1-\beta}\Bigr)^{\frac{1}{3}} } \Bigr]^3.
\end{equation}
Substituting \eqref{eqn:MGF_Gbarx_bdd3} in \eqref{eqn:MGF_Gbarx_bdd2} gives us, 
\begin{equation*}
H(y) = e^{-\gamma(C,\beta) R^3 y^4},
\end{equation*}
where $\gamma(C,\beta)$ is a constant dependent on $C$ and $\beta$. In Figure \ref{fig:CCDF_CUBIC_plot}, we illustrate $H(y)$ and $\mathbb{P}(\overline{G}_x \geq y)$ for $x = 0, 0.1, 1$ (with $C = 0.4$, $\beta = 0.3$ and RTT, $R = 1$).
\begin{figure}
\centering
\includegraphics[scale=0.23,trim={4cm 0.5cm 4cm 0.3cm},clip]{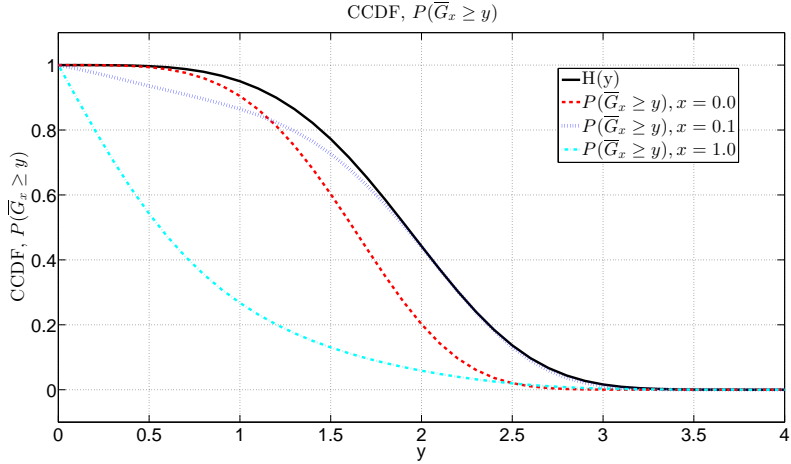}
\caption{TCP CUBIC: Upper bound for $\mathbb{P}(\overline{G}_x \geq y)$.}
\label{fig:CCDF_CUBIC_plot}
\end{figure}
For the version of TCP CUBIC we consider \cite{tcp_cubic_code}, we have $C=0.4$ and $\beta = 0.3$. For these values, $\gamma(C,\beta) = 0.0510 > 0$. (In fact, numerically evaluating $\gamma(C,\beta)$ we find that for $C=0.4$, for all $\beta \in (0,1)$, $\gamma(C,\beta) > 0$.) The function $H(y)$ is a complementary cumulative distribution function and is going down super-exponentially in $y$. Therefore, the moment generating function (MGF) corresponding to $H(y)$ is bounded in a neighborhood of $0$. Since $H(y)$ bounds $\mathbb{P}(\overline{G}_x \geq y)$ for all $x$,  we have
\begin{equation}
\sup_{x>0} \mathbb{E}[e^{t\overline{G}_x}] < \infty,
\end{equation}
for $t$ in some neighborhood of $0$.
\end{proof}

\begin{proposition}
\label{prop:CUBIC_Vbarx_invariance}
The Markov chain $\{ \overline{V}_n \} $ is Harris recurrent and has a unique invariant distribution.
\end{proposition}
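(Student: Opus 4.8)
The plan is to verify the hypotheses of the standard Foster--Lyapunov / Meyn--Tweedie drift criterion for positive Harris recurrence on the state space $(0,\infty)$, and then invoke the theorem that a $\psi$-irreducible, aperiodic chain satisfying a geometric drift toward a petite set is positive Harris recurrent and admits a unique invariant probability measure. The three ingredients I would establish, in order, are: \emph{(i)} $\psi$-irreducibility and aperiodicity, \emph{(ii)} that compact sets are petite, and \emph{(iii)} a geometric drift inequality, the last of which is where Lemma \ref{lemma:MGF_Gbarx_bdd} enters decisively.

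For \emph{(i)}, I would read off from \eqref{eqn:CUBIC_lim_Vn_def} that, conditional on $\overline{V}_{n-1}=x$, the map $g \mapsto (1-\beta)C(gR - K(x))^3 + x$ is a strictly increasing continuous bijection of $[0,\infty)$ onto $[(1-\beta)x,\infty)$; its derivative $3(1-\beta)CR(gR-K(x))^2$ is nonnegative and vanishes only at the single point $g=K(x)/R$. Since $\overline{G}_x$ has a continuous density that is strictly positive on $(0,\infty)$ --- the complementary distribution function in \eqref{eqn:Gbarx_CUBIC2} being smooth and strictly decreasing --- the chain admits a transition density that is continuous and strictly positive on $((1-\beta)x,\infty)$. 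Hence $P(x,A)>0$ whenever $A$ meets $((1-\beta)x,\infty)$ in a set of positive Lebesgue measure, and iterating the near-minimal step $g\approx 0$ drives the chain below $(1-\beta)^n x$, so every set of positive Lebesgue measure is reached from every $x$ in finitely many steps. This gives $\psi$-irreducibility with $\psi$ equal to Lebesgue measure on $(0,\infty)$; aperiodicity follows because the one-step density over a fixed compact set overlaps itself. For \emph{(ii)}, I would note that the kernel is weak Feller, since the right-hand side of \eqref{eqn:CUBIC_lim_Vn_def} is jointly continuous in $(x,g)$ and $x \mapsto \mathbb{P}(\overline{G}_x \geq \cdot)$ is continuous by \eqref{eqn:Gbarx_CUBIC2}; together with the strictly positive continuous density component this makes $\{\overline{V}_n\}$ a $T$-chain, and for a $\psi$-irreducible $T$-chain all compact sets are petite.

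For \emph{(iii)}, I would take $V(x)=1+x$ and expand the conditional mean using $(1-\beta)CK(x)^3=\beta x$ and $K(x)=O(x^{1/3})$, obtaining
\begin{equation*}
\mathbb{E}[\overline{V}_1 \mid \overline{V}_0 = x] = (1-\beta)x + 3(1-\beta)CR\,K(x)^2\,\mathbb{E}[\overline{G}_x] - 3(1-\beta)CR^2 K(x)\,\mathbb{E}[\overline{G}_x^2] + (1-\beta)CR^3\,\mathbb{E}[\overline{G}_x^3].
\end{equation*}
Lemma \ref{lemma:MGF_Gbarx_bdd} bounds every moment $\mathbb{E}[\overline{G}_x^k]$ uniformly in $x$, so discarding the negative middle term yields $\mathbb{E}[\overline{V}_1 \mid \overline{V}_0 = x] \leq (1-\beta)x + c_1 x^{2/3} + c_0$ for finite constants $c_0,c_1$. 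Because $1-\beta<1$, for any $\lambda \in (1-\beta,1)$ the $O(x^{2/3})$ correction is dominated for large $x$, so $\mathbb{E}[V(\overline{V}_1)\mid \overline{V}_0=x]\leq \lambda V(x)+b\,\mathds{1}_{C}(x)$ holds with $C=[0,x^*]$ compact and $b<\infty$. Combining \emph{(i)}--\emph{(iii)} with the drift theorem gives positive Harris recurrence (indeed geometric ergodicity), hence a unique invariant distribution.

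I expect the main obstacle to be the careful verification of petiteness via the $T$-chain/irreducibility regularity, since that is where the abstract Markov-chain machinery does the real work and where one must argue that the explicit transition density in \emph{(i)} is genuinely positive and locally uniform; once Lemma \ref{lemma:MGF_Gbarx_bdd} supplies the uniform moment bounds, the drift computation is routine and rests only on the elementary observation that the multiplicative factor $1-\beta$ beats the sublinear $O(x^{2/3})$ term.
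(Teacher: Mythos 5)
Your proposal is correct, and it reaches the conclusion by a genuinely different route from the paper. The irreducibility step is essentially shared: the paper also establishes Harris irreducibility w.r.t.\ Lebesgue measure from the absolute continuity of $\overline{G}_x$, the fact that one step maps $x$ into $((1-\beta)x,\infty)$, and the positive probability of stepping into $((1-\beta)x,x)$. Where you diverge is the recurrence criterion: the paper invokes a result of Kalashnikov that needs only \emph{additive} drift, namely $\mathbb{E}[\overline{V}_{n+1}-\overline{V}_n \mid \overline{V}_n = x] \leq -\epsilon$ outside a finite interval together with $\mathbb{E}[\overline{V}_{n+1} \mid \overline{V}_n = x] < \infty$ on it, and thereby avoids any discussion of petite sets, T-chains, or aperiodicity. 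The underlying algebra is identical to yours --- the paper keeps $(1-\beta)C\bigl(-K(x)^3\bigr) = -\beta x$ as the dominant negative term in the drift difference, while you fold the same quantity into the contraction factor $(1-\beta)x$ --- and both arguments hinge in exactly the same way on Lemma \ref{lemma:MGF_Gbarx_bdd} supplying moment bounds on $\overline{G}_x$ uniform in $x$. What your Meyn--Tweedie geometric-drift route buys is a strictly stronger conclusion (geometric ergodicity rather than bare positive Harris recurrence); what it costs is precisely the petiteness verification you flag as the main obstacle, which the paper's criterion sidesteps entirely. One small repair to your sketch: the transition density is not continuous on all of $((1-\beta)x,\infty)$, because the map $g \mapsto (1-\beta)C(gR-K(x))^3 + x$ has vanishing derivative at $g = K(x)/R$, so the density has an integrable singularity at $v = x$; it remains strictly positive, and a truncated continuous minorant of it still furnishes the nontrivial continuous component needed for the T-chain argument, so the gap is cosmetic rather than structural.
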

\begin{proof}
We first prove that the Markov chain $\{ \overline{V}_n \} $ is Harris irreducible w.r.t. the Lebesgue measure on $\mathbb{R}^+$. To prove this, consider a point $x$ in the state space of $\{ \overline{V}_n \} $. Let $L(x,A)$ denote the probability of $\{ \overline{V}_n \} $ hitting set $A$ in a finite time starting with $\overline{V}_0 = x$. From equation \eqref{eqn:CUBIC_lim_Vn_def}, $\mathbb{P}(x,(x(1-\beta), \infty)) = 1$. The distribution of $\overline{G}_x$ is absolutely continuous with respect to the Lebesgue measure on $\mathbb{R}^+$. Therefore, for any set $A$ with non-zero Lebesgue measure, such that $A \subseteq ((1 - \beta) x, \infty)$,  $\mathbb{P}(x,A) > 0$. Hence, $L(x,A) > 0$. Also, $P(x, (x(1-\beta), x)) > 0$. Therefore, for any set $B \subseteq (0, (1 - \beta ) x)$ with non-zero Lebesgue measure, there exists $n$ such that $\mathbb{P}^n(x,B) > 0$. Therefore for any set, $C$ with non-zero Lebesgue measure, $L(x,C) > 0$. Thus the Markov chain $\{ \overline{V}_n \} $ is Harris irreducible w.r.t. the Lebesgue measure.

To show the positive recurrence of the Markov chain, we use a result from \cite[p. 116]{Kalashnikov1993}. In our setup, it is sufficient to prove the following results. There exists a $x^*$ such that
\begin{enumerate}
\item $\mathbb{E}[\overline{V}_{n+1} - \overline{V}_{n} | \overline{V}_{n+1} = x] \leq -\epsilon$, for all $x > x^*$ for some $\epsilon > 0$ .
\item $\mathbb{E}[\overline{V}_{n+1} | \overline{V}_{n} = x] < \infty$, for all $x \leq x^*$.
\end{enumerate}
From equation \eqref{eqn:CUBIC_lim_Vn_def},
\begin{equation}
\label{eqn:meandrift1}
\begin{split}
\mathbb{E}[\overline{V}_{n+1} - \overline{V}_{n} | \overline{V}_{n} = x]  &= \mathbb{E} \Bigl[(1 - \beta) C(\overline{G}_{x} R - K(x))^3  \Bigr] \\
& \leq (1 - \beta) C ( \mathbb{E}[\overline{G}_{x}^3] R^3 + \mathbb{E}[\overline{G}_{x}] R K(x)^2 - K(x)^3 ) \\
& \leq (1 - \beta) C ( \sup_{x>0} (\mathbb{E}[\overline{G}_{x}^3]) R^3 + \sup_{x>0} (\mathbb{E}[\overline{G}_{x}]) R K(x)^2 \\ 
& \hspace*{5cm} - K(x)^3 ). 
\end{split}
\end{equation}
Using Taylor's series expansion, we have
\begin{equation}
\label{eqn:fourthpower}
\frac{t^4 \overline{G}_x^4}{4!} \leq e^{t\overline{G}_x} + e^{-t\overline{G}_x}.
\end{equation}
From Lemma \ref{lemma:MGF_Gbarx_bdd}, $\sup_{x>0} \mathbb{E}[e^{t\overline{G}_x}]$ $< \infty$ for $t \in (-t_0, t_0)$ for some $t_0 > 0$. Therefore, using \eqref{eqn:fourthpower}, $\displaystyle{\sup_{x>0} (\mathbb{E}[\overline{G}_{x}^4])}$ is finite. We have $\overline{G}_{x}^3 \leq \overline{G}_{x}^4 + 1$ and $\overline{G}_{x} \leq \overline{G}_{x}^4 + 1$. Hence, the terms $\displaystyle{\sup_x (\mathbb{E}[\overline{G}_{x}^3])}$ and $\displaystyle{\sup_x (\mathbb{E}[\overline{G}_{x}])}$ are finite.  Therefore, the RHS of the last inequality in \eqref{eqn:meandrift1} has terms $x$ and $x^{\frac{2}{3}}$. Since the $x$ term dominates $x^{\frac{2}{3}}$ for large $x$ and has a negative coefficient, we can find an $x^*$ and an $\epsilon > 0$ such that for all $x > x^*$, 
\begin{equation}
\label{eqn:meandrift2}
\mathbb{E}[\overline{V}_{n+1} - \overline{V}_{n} | \overline{V}_{n+1} = x] \leq -\epsilon.
\end{equation}
Also for $x < x^*$, we have 
\begin{equation}
\label{eqn:meandrift3}
\begin{split}
\mathbb{E}[\overline{V}_{n+1} | \overline{V}_{n} = x]  &= \mathbb{E} \bigl[(1 - \beta) \bigl( C(\overline{G}_{x} R - K(x)^3 \bigr) \bigr] + x \\
& \leq (1 - \beta) C \bigl( \mathbb{E}[\overline{G}_{x}^3] R^3 + \mathbb{E}[\overline{G}_{x}] R K(x)^2 \bigr) + x^* \\
& \leq (1 - \beta) C \bigl( \sup_x (\mathbb{E}[\overline{G}_{x}^3]) R^3 + \sup_x (\mathbb{E}[\overline{G}_{x}]) R K(x^*)^2 \bigr) + x^* \\
& < \infty.
\end{split}
\end{equation}
The last inequality comes from Lemma \ref{lemma:MGF_Gbarx_bdd}. From equations \eqref{eqn:meandrift2} and \eqref{eqn:meandrift3}, we see that the Markov chain $\{ \overline{V}_n \} $ is Harris recurrent and has a unique invariant distribution.
\end{proof}

\subsection{Asymptotic Approximations for Throughput}
\label{subsec:asymptoticEW}
Let us denote the average time between losses, under stationarity, by $\mathbb{E}[G_{V(\infty)}^p]$. The average number of packets sent between two consecutive losses is $p^{-1}$. Therefore, by Palm calculus \cite{Asmussen}, the average window size, $E[W(p)]$, under stationarity, is  
\begin{equation*}
\mathbb{E}[W(p)] = \frac{1}{p\mathbb{E}[G_{V_{\infty}}^p]}.
\end{equation*}
From Proposition \ref{prop:CUBIC_finiteEW}, $\mathbb{E}[W(p)]$ is finite. 

Let $\overline{G}_{\overline{V}_{\infty}}$ denote the random variable with same distribution as the stationary distribution of $\overline{G}_{\overline{V}_{n}}$ (existence of which is proved by Proposition \ref{prop:CUBIC_Vbarx_invariance}). Then, from Proposition \ref{prop:Gbarx_CUBIC}, we expect $\mathbb{E}[p^{\frac{1}{4}} G_{V_{\infty}}^p]$ to be close to $\mathbb{E}[\overline{G}_{\overline{V}_{\infty}}]$ for small $p$ and hence
\begin{equation*}
\mathbb{E}[ W(p)] \approx \frac{p^{-\frac{3}{4}}}{\mathbb{E}[\overline{G}_{\overline{V}_{\infty}}]},
\end{equation*}
for small $p$. 

We can evaluate $\mathbb{E}[\overline{G}_{\overline{V}_{\infty}}]$ using Monte-Carlo simulations. In Figure \ref{fig:CUBIC_Gbar_time_avg}, we illustrate simulations for evaluating $\sum_{i=1}^{n} \frac{\overline{G}_{\overline{V}_{i}}}{n}$ with initial conditions $\overline{V}_{0} =$, $0.0$, $0.1$, $2.0$ for TCP CUBIC with parameters, $C = 0.4, \beta = 0.3$ (used in \cite{Ha2008}) and with RTT, $R = 1$ seconds. We see that in these cases, after $n > 250$, there is little change in $n^{-1} \sum_{i=1}^{n} \overline{G}_{\overline{V}_{i}}$. For the TCP CUBIC parameters as given above with RTT, $R = 1$, and using $n = 10000$, we get $\mathbb{E}[\overline{G}_{\overline{V}_{\infty}}] \approx  0.7690$. 
\begin{figure}
\centering
\includegraphics[scale=0.22]{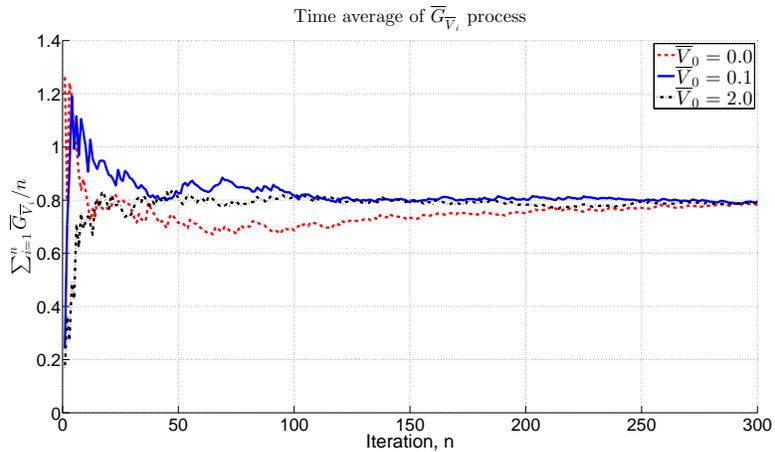}
\caption{Monte-Carlo simulations for $\mathbb{E}[\overline{G}_{\overline{V}_{\infty}}] \approx \sum_{i=1}^{n} \frac{\overline{G}_{\overline{V}_{i}}}{n}$}
\label{fig:CUBIC_Gbar_time_avg}
\end{figure}
Therefore for small $p$, we can approximate the average window size,  $\mathbb{E}[W(p)]$ for TCP CUBIC with $R = 1$ as $1.3004p^{-\frac{3}{4}}$.

From \eqref{eqn:time_between_losses_CUBIC}, the time average window size $\mathbb{E}[W(p)]$ for the fluid model is directly proportional to $R^\frac{3}{4}$. We evaluate $\mathbb{E}[W(p)]$ (approximately using Monte Carlo simulations) for TCP CUBIC with different RTT and find that $\mathbb{E}[W(p)]$ for our Markov chain approximation is also directly proportional to $R^\frac{3}{4}$.
Thus, we can approximate the average window size for TCP CUBIC flow with RTT $R$ as $\mathbb{E}[W(p)] = 1.3004 \Bigl(\frac{R}{p}\Bigr)^{\frac{3}{4}}$.
To account for the TCP Reno mode of operation which we ignored in our approximation, we make the following approximation,
\begin{equation}
\label{eqn:tcp_cubic_dumas}
\mathbb{E}[W(p)] = \max \bigl\{ 1.3004 \bigl(\frac{R}{p}\bigr)^{\frac{3}{4}}, \frac{1.31}{\sqrt{p}}\bigr\},
\end{equation}
where the second term on the RHS approximates the TCP Reno average window size as given in \cite{Mathis1997}. For the same parameters, the deterministic periodic loss model gives us $\mathbb{E}[W(p)] = \max \bigl\{ 1.0538 \bigl(\frac{R}{p}\bigr)^{\frac{3}{4}}, \frac{1.31}{\sqrt{p}}\bigr\}$. In the next section, we compare our model against the deterministic periodic loss model using ns2 simulations.
\section{Simulation Results}
\label{sec:simulation_results}
In Tables \ref{tbl:cubic_dumas_EW} and \ref{tbl:cubic_dumas_G}, we compare the above approximation against the fluid approximate model in \cite{Ha2008}, our earlier Markov model in \cite{Sudheer2011} and ns2 simulations. Table \ref{tbl:cubic_dumas_EW} compares the average window size, whereas Table \ref{tbl:cubic_dumas_G} compares the goodput. The link speeds are set to $10$ Gbps so that the queuing is negligible. The packet sizes are  $1050$ bytes which is the default value in ns2. The maximum window size is set to $40000$. We see that, unlike TCP Reno, the average window size for TCP CUBIC depends on the RTT of the flow and increases with RTT. This behaviour makes TCP CUBIC fairer to flows with larger RTT as compared to TCP Reno and also leads to TCP CUBIC being more efficient over large-delay networks. 

For the deterministic loss model, we use  $\mathbb{E}[W(p)] = \max \{ 1.0538 \bigl(\frac{R}{p}\bigr)^{\frac{3}{4}}, \frac{1.31}{\sqrt{p}}\}$ so as to account for the Reno-mode of operation. If the RTT is small ($<0.1$ sec), TCP CUBIC operates more like Reno. In such cases, the deterministic loss model has accuracy similar to the Markovian models. However, when RTT is large ($>0.1$ sec) the Markovian models are better (sometimes much better especially for $R = 1$ sec) than the deterministic periodic loss model. The Markov model in \cite{Sudheer2011} explicitly considers the TCP-Reno mode behaviour, while here we just use a simple approximation to account for the TCP-Reno mode behaviour. However, in spite of this we see that the Markov model in \cite{Sudheer2011} performs only marginally better than the current Markovian approximation that we use in this paper. When compared against ns2 simulations, the Markov model in \cite{Sudheer2011} typically has errors $<4\%$, whereas for the current approximation given in this paper, the errors are $<5\%$ for most cases. 

\begin{table}
\caption{Average Window size via different approximations.}
\centering
\scalebox{0.85}{
\begin{tabular}{|c|c|c|c|c|c|}
\hline
per & RTT &$\mathbb{E}[W]$ & $\mathbb{E}[W]$ & $\mathbb{E}[W]$ & $\mathbb{E}[W]$ \\
\hline
$p$ & $R$ & Simulations  & Det. Fluid & Markov chain & Approx. Markov  \\
& (sec) & (ns2) & \cite{Ha2008} & \cite{Sudheer2011} & $\max \Bigl\{ 1.3004 \Bigl(\frac{R}{p}\Bigr)^{\frac{3}{4}}, \frac{1.31}{\sqrt{p}}\Bigr\}$ \\
\hline
$\num{1e-2}$ & $1$ & $39.97$ & $33.33$ & $37.44$ & $41.19$ \\
$\num{1e-2}$ & $0.2$ & $14.3$ & $13.10$ & $13.53$ & $13.10$ \\ 
$\num{1e-2}$ & $0.1$ & $12.62$ & $13.10$ & $12.50$ & $13.10$ \\ 
$\num{1e-2}$ & $0.02$ & $12.08$ & $13.10$ & $12.41$ & $13.10$ \\ 
$\num{1e-2}$ & $0.01$ & $11.53$ & $13.10$ & $12.41$ & $13.10$ \\ 
\hline
$\num{5e-3}$ & $1$ & $69.46$ & $56.05$ & $63.78$ & $69.27$ \\ 
$\num{5e-3}$ & $0.2$ & $21.82$ & $18.53$ & $21.02$ & $20.81$ \\ 
$\num{5e-3}$ & $0.1$ & $18.29$ & $18.53$ & $18.09$ & $18.53$ \\ 
$\num{5e-3}$ & $0.02$ & $17.21$ & $18.53$ & $17.73$ & $18.53$ \\ 
$\num{5e-3}$ & $0.01$ & $16.58$ & $18.53$ & $17.73$ & $18.53$ \\
\hline
$\num{1e-3}$ & $1$ & $229.96$ & $187.40$ & $218.32$ & $231.63$ \\ 
$\num{1e-3}$ & $0.2$ & $67.83$ & $56.05$ & $67.92$ & $69.58$ \\ 
$\num{1e-3}$ & $0.1$ & $44.68$ & $41.43$ & $44.55$ & $41.43$ \\ 
$\num{1e-3}$ & $0.02$ & $39.40$ & $41.43$ & $39.94$ & $41.43$ \\ 
$\num{1e-3}$ & $0.01$ & $38.71$ & $41.43$ & $39.94$ & $41.43$ \\ 
\hline
$\num{5e-4}$ & $1$ & $384.43$ & $315.17$ & $370.12$ & $388.56$ \\ 
$\num{5e-4}$ & $0.2$ & $113.05$ & $94.26$ & $114.52$ & $117.02$ \\ 
$\num{5e-4}$ & $0.1$ & $69.12$ & $58.58$ & $70.05$ & $69.24$ \\ 
$\num{5e-4}$ & $0.02$ & $55.89$ & $58.58$ & $56.66$ & $58.58$ \\ 
$\num{5e-4}$ & $0.01$ & $55.38$ & $58.58$ & $56.66$ & $58.58$ \\ 
\hline
$\num{8e-5}$ & $1$ & $1507.19$ & $1245.81$ & $1487.19$ & $1539.87$ \\ 
$\num{8e-5}$ & $0.2$ & $430.49$ & $372.58$ & $454.41$ & $462.57$ \\ 
$\num{8e-5}$ & $0.1$ & $260.91$ & $221.54$ & $271.15$ & $273.69$ \\ 
$\num{8e-5}$ & $0.02$ & $143.99$ & $146.46$ & $143.42$ & $146.46$ \\ 
$\num{8e-5}$ & $0.01$ & $140.83$ & $146.46$ & $142.71$ & $146.46$  \\ 
\hline
\end{tabular}
}
\label{tbl:cubic_dumas_EW}
\end{table}

\begin{table}
\caption{Goodput obtained via different approximations.}
\centering
\scalebox{0.915}{
\begin{tabular}{|c|c|c|c|c|c|}
\hline
per & RTT &$\lambda$ & $\lambda$ & $\lambda$ & $\lambda$ \\
\hline
$p$ & $R$ & Simulations  & Det. Fluid & Markov chain & Approx. Markov  \\
& (sec) & (ns2) & \cite{Ha2008} & \cite{Sudheer2011} & $\frac{E[W(p)]}{R}$ \\
\hline
$\num{1e-2}$ & $1$ & $39.55$ & $32.99$ & $37.06$ & $40.78$ \\
$\num{1e-2}$ & $0.2$ & $70.54$ & $64.85$ & $66.99$ & $64.85$ \\ 
$\num{1e-2}$ & $0.1$ & $124.5$ & $129.69$ & $123.7$ & $129.69$ \\ 
$\num{1e-2}$ & $0.02$ & $595.41$ & $648.45$ & $614.29$ & $648.45$ \\ 
$\num{1e-2}$ & $0.01$ & $1135.12$ & $1296.9$ & $1228.58$ & $1296.9$ \\ 
\hline
$\num{5e-3}$ & $1$ & $69.09$ & $55.77$ & $63.46$ & $68.93$ \\ 
$\num{5e-3}$ & $0.2$ & $108.43$ & $92.19$ & $104.56$ & $103.53$ \\ 
$\num{5e-3}$ & $0.1$ & $181.75$ & $184.37$ & $180.04$ & $184.37$ \\ 
$\num{5e-3}$ & $0.02$ & $854.30$ & $921.87$ & $881.97$ & $921.87$ \\ 
$\num{5e-3}$ & $0.01$ & $1645.59$ & $1843.74$ & $1763.94$ & $1843.74$ \\
\hline
$\num{1e-3}$ & $1$ & $226.72$ & $187.21$ & $218.10$ & $231.40$ \\ 
$\num{1e-3}$ & $0.2$ & $338.76$ & $279.95$ & $339.25$ & $347.46$ \\ 
$\num{1e-3}$ & $0.1$ & $446.28$ & $413.89$ & $445.09$ & $413.89$ \\ 
$\num{1e-3}$ & $0.02$ & $1966.64$ & $2069.43$ & $1994.96$ & $2069.43$ \\ 
$\num{1e-3}$ & $0.01$ & $3862.36 $ & $4138.87$ & $3989.91$ & $4138.86$ \\ 
\hline
$\num{5e-4}$ & $1$ & $384.16$ & $315.01$ & $369.94$ & $389.37$ \\ 
$\num{5e-4}$ & $0.2$ & $564.90$ & $471.05$ & $572.30$ & $584.81$ \\ 
$\num{5e-4}$ & $0.1$ & $690.72$ & $585.51$ & $700.16$ & $692.04$ \\ 
$\num{5e-4}$ & $0.02$ & $2791.22$ & $2927.54$ & $2831.66$ & $2927.54$ \\ 
$\num{5e-4}$ & $0.01$ & $5528.73$ & $5855.07$ & $5663.38$ & $5855.07$ \\ 
\hline
$\num{8e-5}$ & $1$ & $1506.79$ & $1245.71$ & $1487.07$ & $1539.75$ \\ 
$\num{8e-5}$ & $0.2$ & $2151.97$ & $1862.77$ & $2271.86$ & $2312.64$ \\ 
$\num{8e-5}$ & $0.1$ & $2608.56$ & $2215.23$ & $2711.30$ & $2736.66$ \\ 
$\num{8e-5}$ & $0.02$ & $7195.12$ & $7322.54$ & $7170.32$ & $7322.54$ \\ 
$\num{8e-5}$ & $0.01$ & $14067.18$ & $14645.07$ & $14270.00$ & $14645.07$  \\ 
\hline
\end{tabular}
}
\label{tbl:cubic_dumas_G}
\end{table}

Using the approach described in Section \ref{sec:markovmodel}, we can compute expressions for average window size for different TCP CUBIC parameters to study effect of these parameters on TCP performance. In Table \ref{tbl:cubic_dumas_beta_pt2}, we compare the results for $\beta = 0.2$ which is used by an older version of TCP CUBIC and is also widely used \cite{Yang2014}. For this parameter setting, we get 
\begin{equation}
\label{eqn:cubic_dumas_beta_pt2}
\mathbb{E}[W(p)] = \max \Bigl\{ 1.54 \Bigl(\frac{R}{p}\Bigr)^{\frac{3}{4}}, \frac{1.31}{\sqrt{p}}\Bigr\}.
\end{equation}
In this case, the median errors for the deterministic loss model in \cite{Ha2008}, for the Markov chain model in \cite{Sudheer2011} and the current approximation are $11\%$, $4.5\%$ and $6.2\%$ respectively.

\begin{table}
\caption{Comparison of different approximations for $\beta = 0.2$.}
\scalebox{0.8}{
\begin{tabular}{|c|c|c|c|c|c|}
\hline
per & RTT &$\mathbb{E}[W]$ & $\mathbb{E}[W]$ & $\mathbb{E}[W]$ & $\mathbb{E}[W]$ \\
\hline
$p$ & $R$ & Simulations  & Det. Fluid & Markov chain & Approx. Markov  \\
& & (ns2) & \cite{Ha2008} & \cite{Sudheer2011} & $\max \Bigl\{ 1.54 \Bigl(\frac{R}{p}\Bigr)^{\frac{3}{4}}, \frac{1.31}{\sqrt{p}}\Bigr\}$ \\
\hline
$\num{1e-2}$ & $1$ & $45.41$ & $37.13$ & $42.67$ & $48.69$ \\
$\num{1e-2}$ & $0.2$ & $15.52$ & $13.10$ & $14.53$ & $14.56$ \\ 
$\num{1e-2}$ & $0.1$ & $13.09$ & $13.10$ & $12.61$ & $13.10$ \\ 
$\num{1e-2}$ & $0.02$ & $12.09$ & $13.10$ & $12.5$ & $13.10$ \\ 
$\num{1e-2}$ & $0.01$ & $11.59$ & $13.10$ & $12.5$ & $13.10$ \\ 
\hline
$\num{5e-3}$ & $1$ & $78.43$ & $62.44$ & $72.69$ & $81.89$ \\ 
$\num{5e-3}$ & $0.2$ & $24.56$ & $18.67$ & $23.46$ & $24.49$ \\ 
$\num{5e-3}$ & $0.1$ & $19.06$ & $18.53$ & $18.31$ & $18.53$ \\ 
$\num{5e-3}$ & $0.02$ & $17.28$ & $18.53$ & $17.71$ & $18.53$ \\ 
$\num{5e-3}$ & $0.01$ & $16.69$ & $18.53$ & $17.71$ & $18.53$ \\
\hline
$\num{3e-3}$ & $1$ & $116.52$ & $91.59$ & $107.53$ & $120.12$ \\ 
$\num{3e-3}$ & $0.2$ & $35.13$ & $27.39$ & $34.22$ & $35.92$ \\ 
$\num{3e-3}$ & $0.1$ & $25.47$ & $23.92$ & $24.36$ & $23.92$ \\ 
$\num{3e-3}$ & $0.02$ & $22.46$ & $23.92$ & $22.88$ & $23.92$ \\ 
$\num{3e-3}$ & $0.01$ & $21.74$ & $23.92$ & $22.88$ & $23.92$ \\ 
\hline
\end{tabular}
}
\label{tbl:cubic_dumas_beta_pt2}
\end{table}

\subsection{Extension to Multiple TCP connections}
The expression \eqref{eqn:tcp_cubic_dumas} was obtained assuming the RTT to be constant, i.e., we assumed that the queuing was negligible. However when multiple TCP connections go through a link, the queuing may not be non-negligible. We can approximate the average window size for TCP CUBIC in this case by replacing $R$ with $\mathbb{E}[R]$ in \eqref{eqn:tcp_cubic_dumas}.

For illustration, we consider an example with $15$ flows sharing a single bottleneck link of speed $100$ Mbps. The packet error rates for five of these flows is $0.01$, it is $0.001$ for another group of five flows and for the rest it is set to $0.0001$. The propagation delays for the five connections in each group are set to $0.05$, $0.1$, $0.2$, $0.25$, and $0.5$ sec respectively. The average RTT of these flows is computed using the M/G/1 approximation from \cite{Poojary2016}, where the bottleneck queue is assumed to be an M/G/1 queue. In Figure  \ref{fig:15flows_CUBIC_throughput}, we plot a scatterplot comparing the throughput obtained using the M/G/1 approximation with ns2 simulations. The packet sizes were set to $1050$ bytes. The model approximations and simulations differ by $< 5\%$ for most cases and the maximum difference is $17\%$.

\begin{figure}
\centering
\includegraphics[scale=0.16]{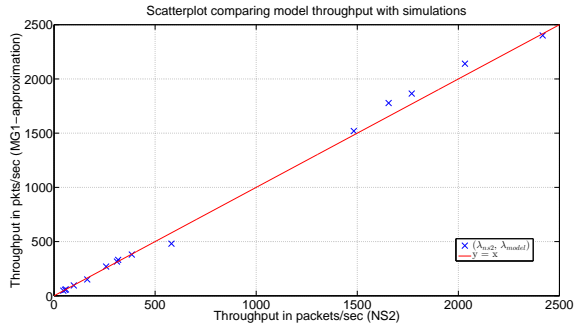}
\caption{Scatter plot for TCP CUBIC throughput ($15$ flows, bottleneck link speed: $100$Mbps)}
\label{fig:15flows_CUBIC_throughput}
\end{figure}
\section{Conclusion}
\label{sec:conclusion}
We have derived throughput expression for a single TCP CUBIC connection with fixed RTT under random losses. To this end, we first considered throughput expression developed for deterministic loss model for TCP CUBIC. We then considered the sequence of TCP window size processes indexed by $p$, the drop rate. We show that with appropriate scaling, this sequence converges to a limiting Markov chain. The scaling is obtained using insights from the deterministic loss model. The stationary distribution of the limiting Markov chain is then used to compute the desired throughput expression. We validate our model and assumptions by comparison with ns2 simulations. The ns2 simulations show a better match with our theoretical model as compared to the deterministic loss model.

\begin{appendix}
\section{Appendix}
\label{app:appendixA}
\begin{proposition}
\label{prop:unif_conv}
Let $\{X_p(x), x \in \mathbb{R}^+\}$, be a process, for $0 < p < 1$, which converges to a limiting process $X(x)$ uniformly in the sense,
\begin{equation}
\label{eqn:unifconv_1}
\lim_{p \rightarrow 0} \sup_{x, y \leq M} \Bigl| \mathbb{P}(X_p(x) \leq y) - \mathbb{P}(X(x) \leq y) \Bigr| = 0,
\end{equation}
for any finite $M$, and for each $x$, let the limiting distribution, $\mathbb{P}(X(x) \leq y)$ be continuous. Then,
\begin{equation}
\lim_{p \rightarrow 0} \sup_{x \leq M} \Bigl| \mathbb{E}f(X_p(x)) -  \mathbb{E}f(X(x)) \Bigr| = 0,
\end{equation}
for any $f:\mathbb{R^+}\rightarrow\mathbb{R}$ continuous with compact support.
\end{proposition}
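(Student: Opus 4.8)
The plan is to reduce the convergence of the expectations to the convergence of finitely many values of the cumulative distribution functions (CDFs), so that the uniformity in $x$ furnished by \eqref{eqn:unifconv_1} is inherited automatically. First I would fix $\epsilon>0$. Since $f$ is continuous with compact support, say $\mathrm{supp}(f)\subseteq[0,K]$, it is bounded, $B:=\|f\|_\infty<\infty$, and uniformly continuous. Hence I can pick a finite partition $0=y_0<y_1<\cdots<y_N=K$ fine enough that the oscillation of $f$ on each subinterval is below $\epsilon$, and let $s$ be the step function equal to $f(y_i)$ on the $i$-th subinterval (with the first subinterval taken closed at $0$) and equal to $0=f$ off $[0,K]$. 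Uniform continuity gives $\|f-s\|_\infty\le\epsilon$ on all of $\mathbb{R}^+$. Consequently, for any nonnegative random variable $Z$ we have $|\mathbb{E}f(Z)-\mathbb{E}s(Z)|\le\epsilon$, and applying this to $Z=X_p(x)$ and $Z=X(x)$ yields, for every $x$,
\[
\bigl|\mathbb{E}f(X_p(x))-\mathbb{E}f(X(x))\bigr|\le 2\epsilon+\bigl|\mathbb{E}s(X_p(x))-\mathbb{E}s(X(x))\bigr|.
\]

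Next I would express the step-function expectations through the CDFs. Writing $F_{p,x}(y)=\mathbb{P}(X_p(x)\le y)$ and $F_x(y)=\mathbb{P}(X(x)\le y)$, and using that $X_p(x),X(x)\ge 0$, one gets $\mathbb{E}s(Z)=\sum_{i=1}^N f(y_i)\,[F_Z(y_i)-F_Z(y_{i-1})]$ with the convention $F_Z(y_0)=\mathbb{P}(Z<0)=0$, a finite sum evaluated at the fixed breakpoints $y_i$. Subtracting the two expressions and using $|f(y_i)|\le B$,
\[
\bigl|\mathbb{E}s(X_p(x))-\mathbb{E}s(X(x))\bigr|\le 2NB\,\sup_{y\le K}\bigl|F_{p,x}(y)-F_x(y)\bigr|.
\]
The decisive point is that the right-hand factor is the only quantity that still depends on $x$: the number of levels $N$ and the bound $B$ depend only on $f$ and $\epsilon$. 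Invoking \eqref{eqn:unifconv_1} with $M'=\max\{M,K\}$ gives $\sup_{x\le M}\sup_{y\le K}|F_{p,x}(y)-F_x(y)|\to 0$ as $p\to 0$, so $\sup_{x\le M}|\mathbb{E}s(X_p(x))-\mathbb{E}s(X(x))|\to 0$. Combining with the previous display and taking $\limsup_{p\to0}\sup_{x\le M}$ leaves at most $2\epsilon$; since $\epsilon>0$ was arbitrary, the claimed limit is $0$.

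I expect the main obstacle to be not any individual estimate but the preservation of uniformity in $x$ throughout; this is precisely what the step-function reduction buys, since it isolates the entire $x$-dependence into $\sup_{y\le K}|F_{p,x}(y)-F_x(y)|$, which \eqref{eqn:unifconv_1} controls uniformly. The continuity hypothesis on the limiting law enters only to keep the bookkeeping clean: taking the breakpoints $y_i$ as continuity points of $F_x$ (automatic, since $F_x$ is continuous) guarantees that the limit assigns no mass to the breakpoints, so the increment representation $\mathbb{E}s(X(x))=\sum_i f(y_i)[F_x(y_i)-F_x(y_{i-1})]$ is unambiguous and the essential estimate compares $F_{p,x}$ and $F_x$ at the same finitely many points. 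A final remark is that, in the application (Proposition \ref{prop:Vbarx_CUBIC}), the relevant $f$ is continuous with compact support and the limit $\overline{G}_x$ has a continuous survival function by \eqref{eqn:Gbarx_CUBIC2}, so both hypotheses of the proposition are satisfied.
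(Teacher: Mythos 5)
Your proposal is correct and takes essentially the same route as the paper's proof: both use uniform continuity of $f$ on its compact support $[0,K]$ to replace $f$ by a step function with finitely many breakpoints, reduce the difference of expectations to differences of the CDFs (equivalently, interval probabilities) at those breakpoints, and then invoke the uniform convergence hypothesis \eqref{eqn:unifconv_1} to control the remaining error uniformly in $x \leq M$, finishing with an arbitrary-$\epsilon$ argument. The only cosmetic difference is your explicit bound $2NB\,\sup_{y \leq K}\bigl|F_{p,x}(y) - F_x(y)\bigr|$ in place of the paper's sum of interval-probability differences weighted by $\parallel f \parallel_{\infty}$, which amounts to the same estimate.
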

\begin{proof}
Consider a continuous function $f$ with compact support, $[0,K]$. Such a function is uniformly continuous. Therefore, given any $\epsilon$, there exists $m$ points $u_0 = 0 < u_1 < \cdots <  u_m = K$, such that 
\begin{equation}
\label{eqn:f_unif_cont}
\sup_{u_i< y < u_{i+1}} |f(y) - f(u_i)| < \epsilon,
\end{equation} 
for all $i=1,2,\cdots, m$.
We have 
\begin{equation} 
E[f(X_p(x)] = \int\limits_{0}^{K} f(u) \mathbb{P}(X_p(x) \in du). \\ 
\end{equation} 
From \eqref{eqn:f_unif_cont},
\begin{equation*}
\Bigl| E[f(X_p(x)] - \sum\limits_{i=1}^{m-1} f(u_i) \mathbb{P}(X_p(x) \in (u_i,u_{i+1}]) \Bigr| \leq \epsilon.
\end{equation*}
Similarly,
\begin{equation*} 
\begin{split} 
\Bigl| E[f(X(x)] - \sum\limits_{i=1}^{m-1} f(u_i) \mathbb{P}(X(x) \in (u_i,u_{i+1}])\Bigr| \leq \epsilon. 
\end{split}
\end{equation*}
Therefore, 
\begin{equation*} 
\begin{split} 
\Bigl|\mathbb{E}f(X_p(x) - \mathbb{E}f(X(x)) \Bigr| &\leq  \sum\limits_{i=1}^{m} f(u_i)  \Bigl| \mathbb{P}(X_p(x) \in (u_i,u_{i+1}]) \\
& \hspace*{12mm} - \mathbb{P}(X(x) \in  (u_i,u_{i+1}])\Bigr| + 2 \epsilon \\
&\leq  \sum\limits_{i=1}^{m}    \parallel f  \parallel_{\infty}  \Bigl|\mathbb{P}(X_p(x) \in (u_i,u_{i+1}]) \\ 
& \hspace*{12mm} - \mathbb{P}(X(x) \in (u_i,u_{i+1}])\Bigl| + 2 \epsilon,
\end{split}
\end{equation*} 
where $\parallel f \parallel_{\infty} = \sup \{f(x): x \in [0,K] \}$. Since $f$ is continuous over a compact support, it is bounded and hence $ \parallel f \parallel_{\infty} < \infty$.
Therefore
\begin{equation*} 
\begin{split} 
\lim_{p \rightarrow 0} & \sup_x \Bigl|\mathbb{E}f(X_p(x) - \mathbb{E}f(X(x)) \Bigr|  \\ 
& \leq  \lim_{p \rightarrow 0} \parallel f \parallel_{\infty} \sum\limits_{i=1}^{m} \sup_x \Bigl| \mathbb{P}(X_p(x) \in (u_i,u_{i+1}]) - \mathbb{P}(X(x) \in  (u_i,u_{i+1}]) \Bigr| + 2 \epsilon \\
& = 2 \epsilon.
\end{split}
\end{equation*}
The second relation follows from the hypothesis \eqref{eqn:unifconv_1}. Since $\epsilon$ is arbitrary we get the desired result.
\end{proof}

\end{appendix}

\bibliographystyle{IEEEtran} 
\bibliography{tcp-references}
\end{document}